\newcommand{\N}{{\mathbf{N}}}
\newcommand{\C}{{\mathbf{C}}}
\newcommand{\A}{{\mathcal A}}
\newcommand{\inner}[2]{{\left\langle#1,#2\right\rangle}}
\newcommand{\e}{{\rm e}}
\newcommand{\eit}[1]{{\rm e}^{i\vartheta_{#1}}}
\newcommand{\sett}[2]{{\{#1 \ \mid \ #2\}}}
\newcommand{\set}[1]{{\left\{#1\right\}}}
\newcommand{\norm}[1]{\left\|#1\right\|}
\renewcommand{\mod}{\,{\rm mod}\,}
\renewcommand{\O}{{\cal O}}
\newtheorem{theorem}{Theorem}
\newtheorem{example}[theorem]{Example }
\begin{document}
\title{Photonic Realization of a Quantum Finite Automaton}

\author{Carlo~Mereghetti$^1$ and Beatrice~Palano$^2$}
\affiliation{$^1$Dipartimento di Fisica ``Aldo Pontremoli'', Universit\`a degli Studi di Milano, via Celoria 16, 20133 Milano, Italy\\
$^2$Dipartimento di Informatica ``Giovanni Degli Antoni'', Universit\`a degli Studi di Milano, via Celoria 18, 20133 Milano, Italy}

\author{Simone~Cialdi$^{3,4}$, Valeria~Vento$^{3}$, Matteo~G.~A.~Paris$^{3,4}$ and Stefano~Olivares$^{3,4}$}
\email{stefano.olivares@fisica.unimi.it}
\affiliation{$^{3}$Dipartimento di Fisica ``Aldo Pontremoli'', Universit\`a degli Studi di Milano, via Celoria 16, 20133 Milano, Italy\\
$^{4}$INFN Sezione di Milano, via Celoria 16, 20133 Milano, Italy}

\begin{abstract}
We describe a physical implementation of a quantum finite automaton 
recognizing a well known family of periodic languages. The realization 
exploits the polarization degree of freedom of single photons and 
their manipulation through linear optical elements. We use techniques 
of confidence amplification to reduce the acceptance error probability
of the automaton. It is worth remarking that the quantum finite automaton we
physically realize is not only interesting {\em per se}, but it turns out to be a crucial
building block in many quantum finite automaton design frameworks theoretically settled
in the literature.
\end{abstract}
\maketitle
\section{Introduction}

Quantum computing is a prolific research area, halfway between physics
and computer science \cite{BBBV97,BV97,Gru99, Hir04, NC11}. Most likely, its origins may be dated back to
70's, when some works on quantum information began
to appear (see, e.g., \cite{Hol73,Ing76}). In  early 80's, R.P.~Feynman suggested that
the computational power of quantum mechanical processes might be
beyond that of traditional computation models \cite{Fe82}.  A similar idea was put forth by Y.I.\ Manin \cite{Man80}.
Almost at the same time, P.~Benioff already proved that such processes are at
least as powerful as Turing machines \cite{Be82}. In 1985, D.~Deutsch
proposed the notion of a quantum Turing machine as a
physically realizable model for a quantum computer~\cite{De85}. 

The first impressive result witnessing ``quantum power'' was P.~Shor's algorithm for integer factorization,
which could run in polynomial time on a quantum computer \cite{Sho97}. It should be stressed that no
classical polynomial time factoring algorithm is currently known. On this fact,
 the security of many nowadays cryptographic protocols, e.g.\ RSA and Diffie-Hellman, actually relies.
Another relevant progress was made by L.~Grover, who proposed a quantum algorithm for searching an
item in an unsorted database containing $n$ items, which runs in time~$O(\sqrt{n})$~\cite{Gro96}.

These and other theoretical advances naturally drove much attention end efforts on \emph{the physical realization} of quantum computational devices
(see, e.g., \cite{CY95,DiV00,NKST06,FL19}).
 While we can hardly expect to see a full-featured quantum computer in the near future, it might be reasonable to
envision classical computing devices incorporating quantum components. Since the physical realization of quantum computational systems has proved to 
be an extremely complex task, it is also reasonable to keep quantum components as ``small'' as possible. Small size quantum devices are
modeled by \emph{quantum finite automata}, a theoretical model for quantum machines with finite memory.

Indeed, in current implementations of quantum computing
the preparation and initialization of qubits in superposition or/and entangled states is often challenging, 
making worth the study of quantum computation with restricted memory, which requires less demanding resources,
as in the case of the quantum finite automata.

The simplest and most promising from a physical realization viewpoint model of a quantum finite automaton is the so called
\emph{measure-once quantum finite automaton} \cite{BC01,BC01a,BP02,MC00}. Such a model also served as a basis for defining several variants of quantum finite automata introduced and studied in a plenty of contributions (see, e.g., \cite{ABG06,AW02,AY18,BMP03,BMP10,MP01x,ZQLG12}). Being the only model to be considered in the present paper, from now on for the sake of brevity we will simply write ``quantum finite automaton'' instead of ``measure-once quantum finite automaton''.

The ``hardware'' of a (one-way) quantum finite automaton is that of a classical finite automaton. Thus, we have an input tape scanned by a one-way input head moving one position forward at each move, plus a finite basis state control. Some basis states are designated as \emph{accepting} states. At any given time during the computation, the state of the quantum finite automaton is represented by a complex linear combination of classical basis states, called a superposition. At each step, a unitary transformation associated with the currently scanned input symbol makes the automaton evolve to the next superposition. Superposition dynamics can transfer the complexity of the problem from a large number of sequential steps to a large number of coherently superposed quantum states. At the end of input processing, the automaton is observed in its final superposition. This operation makes the superposition collapse to a particular (classical) basis state with a certain probability. The probability that the automaton accepts the input word is given by the probability of observing (collapsing into) an accepting basis state.
Quantum finite automata exhibit both advantages and disadvantages with respect to their classical (e.g., deterministic or probabilistic) counterparts. Basically, quantum superposition offers some computational advantages on probabilistic superposition. On the other hand, quantum dynamics must be reversible, and this requirement may impose severe computational limitations to finite memory devices. As a matter of fact, it is sometimes impossible to simulate classical finite automata by quantum finite automata.
In fact, as we will discuss in Section \ref{1qfa}, isolated cut point quantum finite automata recognize a proper subclass of regular languages \cite{BC01,BP02,MC00}. 

Although weaker from a computational power point of view, quantum finite automata may greatly outperform classical ones when 
\emph{descriptional power} is at stake. In the realm of descriptional complexity \cite{HK10}, models of computation are compared on 
the basis of their \emph{size}. In case of finite state machines, a commonly assumed size measure is the number of finite control states.
Most likely, the first contribution explicitly studying the descriptional power of quantum vs.\ classical finite automata 
is \cite{AF98}, where an extremely succinct quantum finite automaton is provided, accepting the unary 
language $L_m=\sett{a^k\!}{k\in\N\,\mbox{ and }\,k\mod m = 0}$ for any given $m>0$. The construction in \cite{AF98} uses as a basic (and sole) module 
a quantum finite automaton $\cal A$ for $L_m$ with 2 basis states, whose acceptance reliability is then enhanced within a suitable modular 
building framework where traditional compositions (i.e., direct products and sums) of quantum systems are performed.
Actually, many (if not all) contributions in the literature aiming to design small size quantum finite automata for several tasks
(see, e.g., \cite{AN09,BMP03,BMP03a,BMP05,BMP06,BMP14,BMP17,MP02,MP07}) use the module~$\cal A$ as a crucial building block. In this sense, the language~$L_m$ and the module $\cal A$ turn out to be ``paradigmatic'' as tools to build and test size-efficient quantum finite automata. Hence, a physical realization of the module $\cal A$ might be well worth investigating.

In this paper, we put forward a physical implementation of quantum 
finite automata based on the polarization degree of freedom of 
single photons and able to recognize a family of periodic languages. 
More precisely, due to above stressed centrality in quantum finite automaton design frameworks, 
we focus on the physical implementation of the quantum finite automaton~$\cal A$ for the language~$L_m$.
We investigate the performance of our photonic automaton taking into 
account the main sources of error and imperfections, e.g. in the 
preparation of the initial automaton state. We also use techniques 
of confidence amplification to reduce the acceptance error probability 
of the automaton. 

The paper is structured as follows. In Section~\ref{pre}, we provide an almost self contained
overview of the basic concepts underling formal language theory and classical finite automa.
Moreover, we quickly address practical impacts of finite automata and the importance of investigating
their size in the light of possible physical implementations of such devices.
Next, we present the notion of a quantum finite automaton together with some basic facts 
on its computational and descriptional power. We particularly focus on unary automata, i.e., automata with a single-letter input alphabet, 
and emphasize the notion of a language accepted with isolated cut point. In 
Section~\ref{sec:1qfa}, we introduce a simple unary language, as a benchmark 
upon which to test the descriptional power of classical and quantum finite 
automata, namely the language  $L_m=\sett{a^k\!}{k\in\N\,\mbox{ and }\,k\mod m = 0}$ for any given $m>0$.
We provide a theoretical definition of a quantum finite 
automaton $\cal A$ accepting $L_m$ with isolated cut point and 2 basis states, whereas any classical 
automaton for $L_m$ requires  a number of states which grows with $m$.

The photonic implementation of  the quantum finite automaton $\cal A$ 
with 2 basis states is then discussed in
Section~\ref{s:implementation}. There, we start reviewing the standard 
quantum formalism used to describe the polarization state of the single 
photon, its dynamics and the link with the formalism used in the previous 
sections. Then, we explain the working principle of the photonic implementation
of the quantum finite automaton and propose a discrimination strategy to 
reduce the acceptance error probability. Section~\ref{s:experimental} 
describes the experimental apparatus and reports the results we obtained.
Finally, we close the paper with Section~\ref{s:conclusion}, where we 
draw some concluding remarks and outlooks of our work.
\section{Preliminaries}\label{pre}
\subsection{Formal languages and classical finite automata}\label{prel:fa}
{\em Formal language theory} studies languages from a mathematical point of view, providing formal tools and
methods to analyze language properties. Strictly connected with {\em automata theory}, the discipline dates
back to 50's, and it was originally developed to provide a theoretical basis for natural language processing.
It was soon realized that this theory was relevant to the artificial languages (e.g., programming languages) that had originated in computer science. Since its birth,
formal language theory has established as one of the most prominent area in theoretical computer science.
Its results have huge impact in numerous fields, just to cite some: practical computer science, cryptography and security, discrete mathematics and combinatorics, graph theory, mathematical logic, nature inspired (e.g., quantum, bio, {\sc dna}) computational models, physics, system theory.

The reader may find a lot of excellent textbooks where thoughtful presentations of formal language and automata theory and their applications are presented (see, e.g.,~\cite{HU01,HU79}). In order to keep this paper as self-contained 
as possible, we are now to present basic notions and notations of formal language and automata theory, and briefly emphasize those aspects which are relevant to
the present work, i.e., regular languages and finite automata.
\smallskip

An alphabet is any finite set $\Sigma$ of elements called symbols. A word on $\Sigma$ is a sequence 
$\omega=\sigma_1\sigma_2\cdots\sigma_n$ with $\sigma_i\in\Sigma$ 
being its $i$-th symbol, 
The length of $\omega$, i.e., the number of symbols~$\omega$ consists of, is denoted by $|\omega|$.
We let~$\varepsilon$ be the empty word satisfying~$|\varepsilon|=0$. 
The
set of all words (including the empty word) on~$\Sigma$  is denoted~by~$\Sigma^*$, and we let 
$\Sigma^+=\Sigma^*\setminus\set{\varepsilon}$.
A~language $L$ on $\Sigma$ is any subset of~$\Sigma^*$, i.e., $L\subseteq\Sigma^*$.
If $|\Sigma|=1$ we say that $\Sigma$ is a \emph{unary} alphabet, and languages on unary alphabets are called unary languages. In case of unary alphabets, we customarily let $\Sigma=\set{a}$ so that a {unary language} is any set~$L\subseteq a^*$. 
The concatenation of the word $x\in \Sigma^*$ with the word $y\in\Sigma^*$ is the word $xy$ consisting of the sequence of symbols of~$x$ immediately followed by 
the sequence of symbols of $y$. 
For any $\sigma\in\Sigma$ and any positive integer $k$, we let $\sigma^k$ be the word obtained by concatenating $k$ times the symbol $\sigma$. We stipulate
that $\sigma^0=\varepsilon$. 

Several formal tools have been introduced, to rigorously express languages. {\em Formal grammars} are the main
{\em generative} systems for languages. A formal grammar is a quadruple $G=(\Sigma, Q, P, S)$  where $\Sigma$ and 
$Q$ are two disjoint finite alphabets of, respectively, terminal and nonterminal symbols, $S\in Q$ is the start-symbol, and $P$ is the finite set of production rules, or simply, productions. Productions can be regarded as rewriting rules, typically expressed in the form $\alpha\rightarrow\beta$ with $\alpha\in(\Sigma\cup Q)^+$
and~$\beta\in(\Sigma\cup Q)^*$. 
Given $w,z\in(\Sigma\cup Q)^*$, we say that~$z$ is derived in one step from $w$ in $G$ whenever 
$w=x\alpha y$, $z=x\beta y$, and  $\alpha\rightarrow\beta$ is a production rule in $P$. Formally, we write $w\Rightarrow_G z$.
More generally, $z$ is derived from $w$ in~$G$ whenever there is a sequence $w_0,w_1,\ldots,w_{n-1},w_n\in(\Sigma\cup Q)^*$ such that 
$w=w_0\Rightarrow_G w_1 \Rightarrow_G\cdots\Rightarrow_G w_{n-1}\Rightarrow_G w_n=z$. Formally, we write $w\Rightarrow_G^*z$. The language generated by the grammar
$G=(\Sigma, Q, P, S)$ is the set $L(G)\subseteq\Sigma^*$ defined as $L(G)=\sett{\omega\in\Sigma^*}{S\Rightarrow_G^*\omega}$.
Two grammars $G, G'$ are  equivalent whenever $L(G)=L(G')$.

To help reader's intuition, the following Example provides a grammar and establishes the corresponding generated language.

\begin{example}\label{grammar}
When listing grammar production rules, we can write $\alpha\rightarrow\beta_1|\beta_2|\cdots\beta_{n-1}|\beta_n$ as a shortcut for expressing
the set of productions $\alpha\rightarrow\beta_1,\alpha\rightarrow\beta_2,\ldots,\alpha\rightarrow\beta_n$. So, consider the grammar
$G=(\Sigma=\set{a,b},Q=\set{B_0,\ldots,B_{k}},P,B_0)$ where the set $P$ of productions is defined as
\begin{align*}
& P=\set{B_0\rightarrow aB_{0}|bB_{0}|bB_1}\\
&~~~~~~~~~\cup\set{B_i\rightarrow aB_{i+1}|bB_{i+1}\,\mbox{ for $1\le i \le k-1$},\ B_{k}\rightarrow\varepsilon}.
\end{align*}
Let us derive the generated language $L(G)$.
By repeatedly applying the productions $B_0\rightarrow aB_{0}|bB_{0}$, from the start-symbol~$B_0$ we can derive $\alpha B_0$, for any $\alpha\in\set{a,b}^*$.
Formally, $B_0\Rightarrow_G^*\alpha B_0$. At this point, in order to generate a word of terminal symbols only, we must apply the production
$B_0\rightarrow bB_{1}$, thus having $B_0\Rightarrow_G^*\alpha B_0\Rightarrow_G\alpha bB_1$. Then, we are left to sequentially apply
the productions $B_i\rightarrow aB_{i+1}|bB_{i+1}$, for every $1\le i \le k-1$. So, 
$B_0\Rightarrow_G^*\alpha B_0\Rightarrow_G\alpha bB_1\Rightarrow_G^*\alpha b\beta B_{k}$, for any $\beta\in\set{a,b}^*$ and $|\beta|=k-1$.
By applying the last production $B_{k}\rightarrow\varepsilon$, we get $B_0\Rightarrow_G^*\alpha B_0\Rightarrow_G\alpha bB_1\Rightarrow_G^*\alpha b\beta B_{k}\Rightarrow_G\alpha b\beta$.
Thus, the language generated by $G$ writes as
$$
L(G)=\sett{\omega\in\set{a,b}^*}{\omega=\alpha b\beta\mbox{ and }|\beta|=k-1}.
$$
In words, $L(G)$ consists of those words on $\set{a,b}$ featuring a symbol $b$ at the $k$-th position from the right.

\end{example}

Originally, four  types of grammars have been pointed out, depending on the form of productions. The corresponding four classes
of generated languages turn out to be relevant both from a practical and a theoretical point of view. Precisely, $G=(\Sigma, Q, P, S)$
is a grammar of:
\begin{description}
\item[{\sc type 0}:] whenever productions in $P$ do not have any particular restriction. The class of languages generated by this type of grammars is the class
of {\em recursively enumerable languages}. 
\item[{\sc type 1} or context-sensitive:] whenever every production $\alpha \rightarrow \beta\in P$ satisfies $|\alpha|\leq|\beta|$; the production $S\rightarrow\varepsilon$ is allowed provided $S$ never occurs within the right part of any production in $P$.  The class of languages generated by this type of grammars is the class
of {\em context-sensitive languages}. 
\item[{\sc type 2} or context-free:] whenever every production in $P$ is of the form $A \rightarrow \beta$ with $A\in Q$. The class of languages generated by this type of grammars is 
the class of {\em context-free languages}. 
\item[{\sc type 3} or regular:] whenever every production is of the form $A\rightarrow\varepsilon$, $A \rightarrow \sigma$, or $A \rightarrow \sigma B$ with $\sigma\in\Sigma$ 
and $A,B\in Q$.
 The class of languages generated by this type of grammars is the class of {\em regular languages}. The reader may easily verify that the grammar proposed in Exercise~\ref{grammar} is a {\sc type 3} grammar, and hence the generated language is an example of regular language.
\end{description}
It can be shown that for any given {\sc type $i+1$} grammar, an equivalent {\sc type $i$} grammar can be built. Hence, the
class of regular languages is contained in the class of context-free languages, which is contained in the class of context-sensitive languages, which in turn is contained in the 
class of recursively enumerable languages. In addition, we have that such a language class hierarchy is proper. In fact:
{\em (i)} there exist languages outside the class of recursively enumerable languages,
{\em (ii)} there exist recursively enumerable languages that cannot be generated by any context-sensitive grammar,
{\em (iii)}~the ternary context-sensitive language $\sett{a^nb^nc^n}{n\in \N}$ cannot be generated by any context-free grammar,
{\em (iv)} the binary context-free language $\sett{a^nb^n}{n\in \N}$ cannot be generated by any regular grammar.
Beside the one in Example \ref{grammar}, further instances of regular languages will be provided below.
This language class hierarchy is usually known as {\em Chomsky hierarchy}, and  the whole formal language and automata theory has been developing
around it.
Every level of the hierarchy has been deeply investigated, yielding profound results and widespread applications.
 
An alternative equivalent approach to define Chomsky hierarchy uses language {\em accepting} systems, i.e., roughly speaking, formal computational devices which process
input words and outcome an accept/reject final verdict. For one such device, the corresponding accepted (or recognized) language consists of those input words
that are accepted. According to this point of view:
{\em (i)} the class of recursively enumerable languages coincides with the class of languages accepted by {\em Turing machines},
{\em (ii)} the class of contex-sensitive languages coincides with the class of languages accepted by \emph{linear bounded automata},
{\em (iii)} the class of context-free languages coincides with the class of languages accepted by \emph{nondeterministic pushdown automata},
{\em (iv)} the class of regular languages coincides with the class of languages accepted by (several types of) \emph{finite automata}.
\smallskip

In this paper, we will be concerned with the class of regular languages. In particular, we will focus on the computational model of {\em finite automata}
defining them (see {\em (iv)} above). For extensive and thoughtful surveys on classical finite automata theory, the reader is referred to, e.g., \cite{HU01,HU79,Pa71}.
Several types of finite automata have been introduced and deeply investigated in the literature. Let us begin by the original and most basic version.
In Figure~\ref{fa}, the ``hardware'' of a {\em one-way deterministic finite automaton} (1dfa, for short \cite{RS59})~$A$ is depicted.
We remark that the other versions of finite automata we are going to overview share the same hardware, but exhibit different dynamics.
\begin{figure}[hbt]
\begin{center}
\includegraphics[width=0.8\columnwidth]{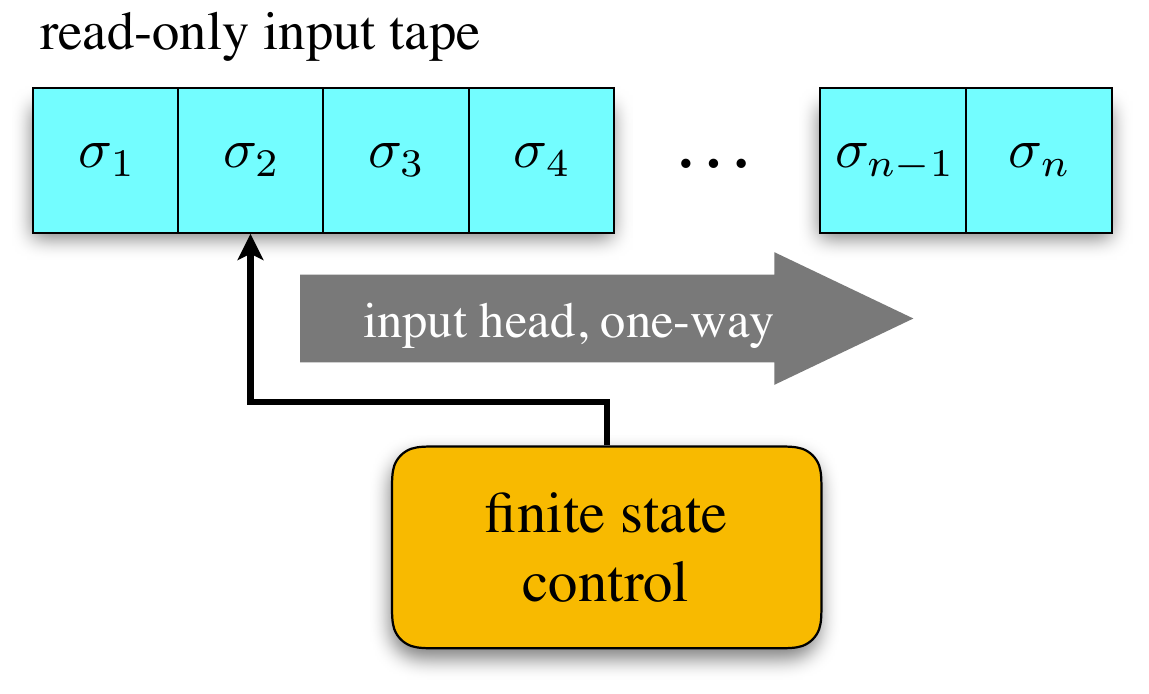}
\end{center}
\vspace{-0.5cm}
\caption{Schematic diagram of the ``hardware'' of a one-way deterministic
finite automaton (1dfa). The 1dfa is made of a read-only
input tape consisting of a sequence of cells, each one capable of
storing a symbol. The tape may be scanned by a ``head'', which
is moving one position right at each step. At each stage of the
computation of $A$, a finite state control is in a state from a finite set $Q$.\label{fa}}
\end{figure}

We have a read-only input tape consisting of a sequence of cells, each one being able to store an input symbol.
The tape is scanned by an input head always moving one position right at each step. This type of input head motion motivates the designation ``one-way''. At each time during the computation of $A$, a finite state control 
is in a state from a finite set $Q$.  Some of the states in $Q$ are designated as \emph{accepting} states, while $q_0\in Q$ is a designated
initial state.
The computation of $A$ on a word $\omega$ from a given input alphabet $\Sigma$ begins by having: {\em (i)} $\omega$ stored symbol-by-symbol left-to-right
in the cells of the input tape, {\em (ii)} the input head scanning the leftmost tape cell, {\em (iii)} the finite state control  being in the state $q_0$.
In a move, $A$ reads the symbol below the input head and, depending on such a symbol and the state of the finite state control, it switches 
to the next state according to a fixed transition function and moves the input head one position forward. We say that $A$ {\em accepts} $\omega$ if and only if it enters
an accepting state after scanning the rightmost symbol of~$\omega$, otherwise $A$ rejects $\omega$. The language accepted by $A$ is the set $L(A)\subseteq\Sigma^*$
consisting of all the input words accepted by $A$. 

Formally, a 1dfa is a quintuple $A=(Q,\Sigma,\delta,q_0,F)$ where~$Q$ is a finite set of states, with $q_0\in Q$ being the initial state and $F\subseteq Q$ the set of accepting 
states, $\Sigma$ is the input alphabet, and $\delta:Q\times\Sigma\rightarrow Q$ is the transition function defining moves as follows: if $A$ scans the input symbol $\sigma$ by 
being in the state~$p$ and  $\delta(p,\sigma)=q$ holds, then it enters the state $q$ and shift the input head one position forward. The transition function $\delta$
can be inductively extended from symbols in $\Sigma$ to words in $\Sigma^*$ as $\delta:Q\times\Sigma^*\!\rightarrow Q$. Namely, for any $q\in Q$ and $\omega\in\Sigma^*$, we let 
$$
\delta(q,\omega)	=	\left\{
						\begin{array}{ll}
							q	&	\mbox{if $\omega=\varepsilon$}\\
							\delta(\delta(q,\sigma),\alpha)	&	\mbox{if $\omega=\sigma\alpha$.}
						\end{array}
					\right.					
$$ 
Thus, the language accepted by $A$ is the set $L(A)\subseteq\Sigma^*$ defined as  $L(A)=\sett{\omega\in\Sigma^*}{\delta(q_0,\omega)\in F}$.

A nice pictorial representation of a 1dfa $A=(Q,\Sigma,\delta,q_0,F)$ is by its state (or transition) graph $D_A$. Basically, $D_A$ is a labelled digraph having $Q$
as the set of its vertexes, and labelled directed edges representing moves.  Precisely, there exists an edge from vertex $p$ to vertex $q$
with label $\sigma$ if and only if $\delta(p,\sigma)=q$ holds true. Vertexes are usually drawn as circles on the plan with labels indicating the corresponding states, while
labelled arrows join adjacent states.  The vertex corresponding to the state $q_0$ has an incoming arrow, while vertexes associated with 
accepting states in $F$ are double circled. It is easy to see that the computation of $A$ on the input word $\omega$ can be tracked in $D_A$ by following
the unique directed path labelled $\omega$ from the vertex $q_0$. So, $A$ {\em accepts} $\omega$ if and only if such a path ends up in a double circled vertex.

To clarify above notions, the next Example displays a 1dfa accepting a simple unary language. We provide such a 1dfa both in its formal definition as a quintuple and as 
state graph.
 
\begin{example}\label{ex:1}
The following simple unary language will play an important role throughout the rest of the paper. For any given
integer $m>0$, let
\begin{equation}
L_m=\sett{a^k\!}{k\in\N\,\mbox{ and }\,k\mod m = 0}.
\end{equation}
Such a language can be accepted by the 1dfa
$$A=(Q=\set{q_0,q_1,\ldots,q_{m-1}},\Sigma=\set{a},\delta,q_0,F=\set{q_0}),$$
where, for any $0\leq i\le m-1$, we set $\delta(q_i,a)=q_{(i+1)\mod m}$. It~is easy to see that $\delta(q_0,a^k)=q_{k\mod m}$ which is $q_0$ if and only if $\,k\mod m=0$ if and only if $a^k\in L_m$. Hence, $L(A)=L_m$. The state graph for the 1dfa $A$ is depicted in Figure \ref{circle}. Due to unary input alphabet, all edges would have the same label `$a$' which can then be safely omitted.
\begin{figure}[hbt]
\begin{center}
\includegraphics[width=0.5\columnwidth]{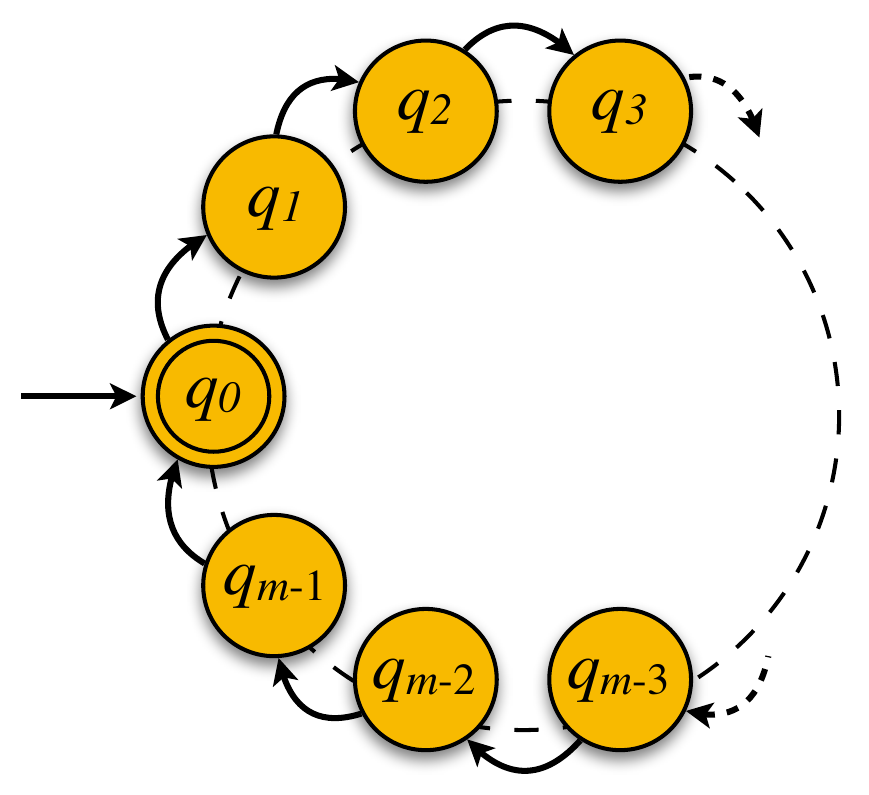}
\end{center}
\vspace{-0.5cm}
\caption{The state graph for the 1dfa $A$ accepting the language $L_m$.\label{circle}}
\end{figure}
\end{example}

Let us now turn to the model of a {\em one-way nondeterministic finite automaton} (1nfa, for short \cite{RS59}). Formally, a 1nfa is a quintuple
 $A=(Q,\Sigma,\delta,q_0,F)$ in which every component is defined as in 1dfa's but the transition function which is now a mapping
 $\delta:Q\times\Sigma\rightarrow {\bf 2}^Q$, where ${\bf 2}^Q$ denotes the powerset of~$Q$, i.e., the set of all subsets of $Q$. Differently from the deterministic case, 
 now at each move $A$ has several candidates as possible next state. Precisely:
 if $A$ scans the input symbol~$\sigma$ by 
being in the state~$p$ and  $\delta(p,\sigma)=S$ holds, then it may enter one of the states in $S$ and shift the input head one position forward.
Thus, on any input word $\omega$, more computation paths from $q_0$ exist; if at least one of such paths leads to an accepting state, then $A$ {\em accepts} $\omega$. More formally, we  can inductively extend the transition function $\delta$ to subsets of states and words as 
$\delta:{\bf 2}^Q\times\Sigma^*\rightarrow{\bf 2}^Q$. First of all, we define the extension $\delta:{\bf 2}^Q\times\Sigma\rightarrow{\bf 2}^Q$ as $\delta(S,\sigma)=\cup_{q\in S}\delta(q,\sigma)$, for any $S\subseteq Q$ and~$\sigma\in\Sigma$. Then, for any $S\subseteq Q$ and $\omega\in\Sigma^*$, we let 
$$
\delta(S,\omega)	=	\left\{
						\begin{array}{ll}
							S	&	\mbox{if $\omega=\varepsilon$}\\
							\delta(\delta(S,\sigma),\alpha)	&	\mbox{if $\omega=\sigma\alpha$.}
						\end{array}
					\right.					
$$ 
Thus, the language accepted by $A$ is the set $L(A)\subseteq\Sigma^*$ defined as  $L(A)=\sett{\omega\in\Sigma^*}{\delta(\set{q_0},\omega)\cap F\neq\emptyset}$.
The reader may easily verify that a 1dfa can be seen as a 1nfa where, for any $q\in Q$ and $\sigma\in\Sigma$, we have that $\delta(q,\sigma)$ contains a single state.
 
The state graph $D_A$ for the 1nfa  $A=(Q,\Sigma,\delta,q_0,F)$ can be defined as above for the deterministic case, but now an edge from vertex $p$ to vertex $q$
with label $\sigma$ exists if and only if $q\in \delta(p,\sigma)$ holds true. This means that, in general, a vertex may present more outgoing edges with the same
label. Thus, $A$ accepts an input word $\omega$ if and only if there exists a path in~$D_A$ labelled $\omega$ from $q_0$ to a double circled vertex.

The following Example proposes a 1nfa expressed as state graph for a binary language.
\begin{example}\label{Ek}
Consider the binary  language in Example~\ref{grammar},  for which a {\sc type 3} grammar was there provided.
Here, we call that language $E_k$ which was defined as
\begin{equation} 
E_k=\sett{\omega\in\set{a,b}^*}{\omega=\alpha b\beta\mbox{ and }|\beta|=k-1}.
\end{equation}
Thus, a word on $\set{a,b}$ is in $E_k$ if and only if its $k$-th symbol from the right is `$b$'.
In Figure \ref{nfa}, the state graph of a 1nfa accepting $E_k$ is depicted. The reader may easily verify 
that the accepted language is exactly $E_k$. Moreover, she/he may straightforwardly work out
an equivalent formal definition of the 1nfa as a quintuple.
\begin{figure}[hbt]
\begin{center}
\includegraphics[width=0.99\columnwidth]{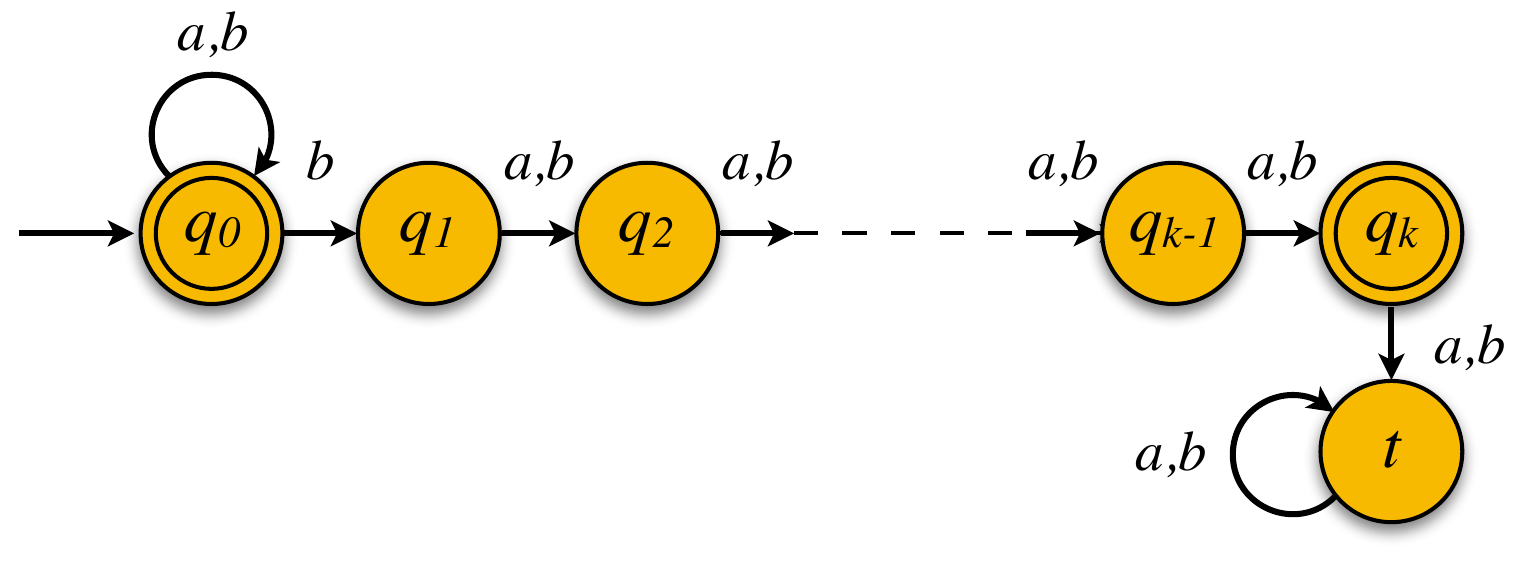}
\end{center}
\vspace{-0.5cm}
\caption{The state graph of a 1nfa for the language $E_k$.\label{nfa}}
\end{figure}
\end{example}

We complete our overview of classical models of finite automata by introducing the notion of a {\em one-way probabilistic finite automaton}
(1pfa, for short \cite{Ra63}). Formally, a 1pfa is a quintuple $A=(Q,\Sigma,\delta,q_0,F)$ in which every component is defined as usual,
but now $\delta$ returns a probability distribution for the next state. More precisely, $\delta:Q\times\Sigma\times Q\rightarrow[0,1]$ is defined such that
$\delta(p,\sigma,q)$ is the probability that $A$, being in the state~$p$, reaches the state $q$ upon reading the symbol~$\sigma$. As usual, the  input head is shifted 
one position right at each move.
Clearly, for any $p\in Q$ and $\sigma\in\Sigma$, we require that $\sum_{q\in Q}\delta(p,\sigma,q)=1$.
Inductively extending the transition function~$\delta$ to words enables us to get $\delta:Q\times\Sigma^*\times Q\rightarrow[0,1]$, where $\delta(p,\omega,q)$ 
yields the probability that $A$, being in the state~$p$, reaches the state $q$ upon reading the input word $\omega$ as
$$\small
\delta(p,\omega,q)	=	\left\{
						\begin{array}{ll}
							0	&	\mbox{if $\omega=\varepsilon$ and $p\ne q$}\\
							1	&	\mbox{if $\omega=\varepsilon$ and $p=q$}\\
							\sum_{s\in Q}\delta(p,\sigma,s)\cdot\delta(s,\alpha,q)	&	\mbox{if $\omega=\sigma\alpha$.}
						\end{array}
					\right.					
$$ 
Thus, the probability that $A$ {\em accepts} the input word $\omega$ writes as $p_A(\omega)=\sum_{q\in F}\delta(q_0,\omega,q)$, i.e., the probability for $A$
to reach an accepting state from the initial state $q_0$ after processing $\omega$. Given a real number $\lambda$, we define the language accepted by
$A$ with cut point $\lambda$ as the set $L_{A,\lambda}=\sett{\omega\in\Sigma^*}{p_A(\omega)>\lambda}$.
A language $L\subseteq \Sigma^*$ is said to be accepted by $A$ {\em with 
isolated cut point $\lambda$} whenever $L = L_{A,\lambda}$ and there
exists $\rho>0$ such that $|p_{A}(\omega)-\lambda|\geq\rho$
for every $\omega\in\Sigma^*$. 
The relevance of isolated cut point acceptance 
is due to the fact that, in this case, we can arbitrarily reduce the 
classification error probability of an input word by repeating a 
constant number of times (not depending on the length of the input 
word) its parsing and taking the majority of the answers~\cite{Pa71,Ra63}. 
In our experiment, we will use this fact to reduce the error 
probability. 
Notice that beside isolated cut point acceptance, other 
probabilistic acceptance mode are widely studied in the literature 
(see, e.g., \cite{AY18,BMP03,BMP17,Gru00}.

Without going into details, even with a 1pfa $A$, a state graph $D_A$ can be naturally associated. Now, edges in $D_A$
are labeled by both a symbol and the corresponding
transition probability.

\begin{example}\label{Lmn}
For two primes $m,n$, let the unary language
\begin{equation}
L_{m\cdot n}=\sett{a^k\!}{k\in\N\,\mbox{ and }\,k\mod\, (m\cdot n) = 0}.
\end{equation}
Notice that this is a particular instance of the unary language introduced in Example \ref{ex:1}.
We define the set of states  $Q=\set{s,p_0,\ldots,p_{m-1},q_0,\ldots,q_{m-1}}$, and construct the 1pfa
$A=(Q,\Sigma=\set{a},\delta,s,F=\set{s,p_0,q_0})$
where we set
\begin{align*}
&\bullet \delta(s,a,p_1)={\scriptstyle \frac{1}{2}}=\delta(s,a,q_1),\\
&\bullet \delta(p_i,a,p_{(i+1)\mod m})={1}=\delta(q_j,a,q_{(j+1)\mod n})\\
&~~~~~ ~~~~~ ~~~~~\mbox{for }0\le i\le m-1~\mbox{ and }~0\le j\le n-1,\\
&\bullet \mbox{ any other transition occurs with probability 0.}
\end{align*}
It is not hard to see that
$$
p_A(a^k)	=	\left\{
						\begin{array}{ll}
							1	&	\mbox{if $a^k\in L_{m\cdot n}$}\\
							\leq {\scriptstyle \frac{1}{2}}	&	\mbox{otherwise.}
						\end{array}
					\right.					
$$
Thus, the 1pfa $A$ accepts $L_{m\cdot n}$ with cut point ${\scriptstyle \frac{3}{4}}$ isolated by ${\scriptstyle \frac{1}{4}}$. The state graph of the 1pfa $A$ is sketched in Figure
\ref{sglmn}. As usual, due to unary input alphabet, we omit the label `$a$' from every edge. Moreover, each edge without an associated probability defines a move occurring with certainty.
\begin{figure}[hbt]
\begin{center}
\includegraphics[width=0.7\columnwidth]{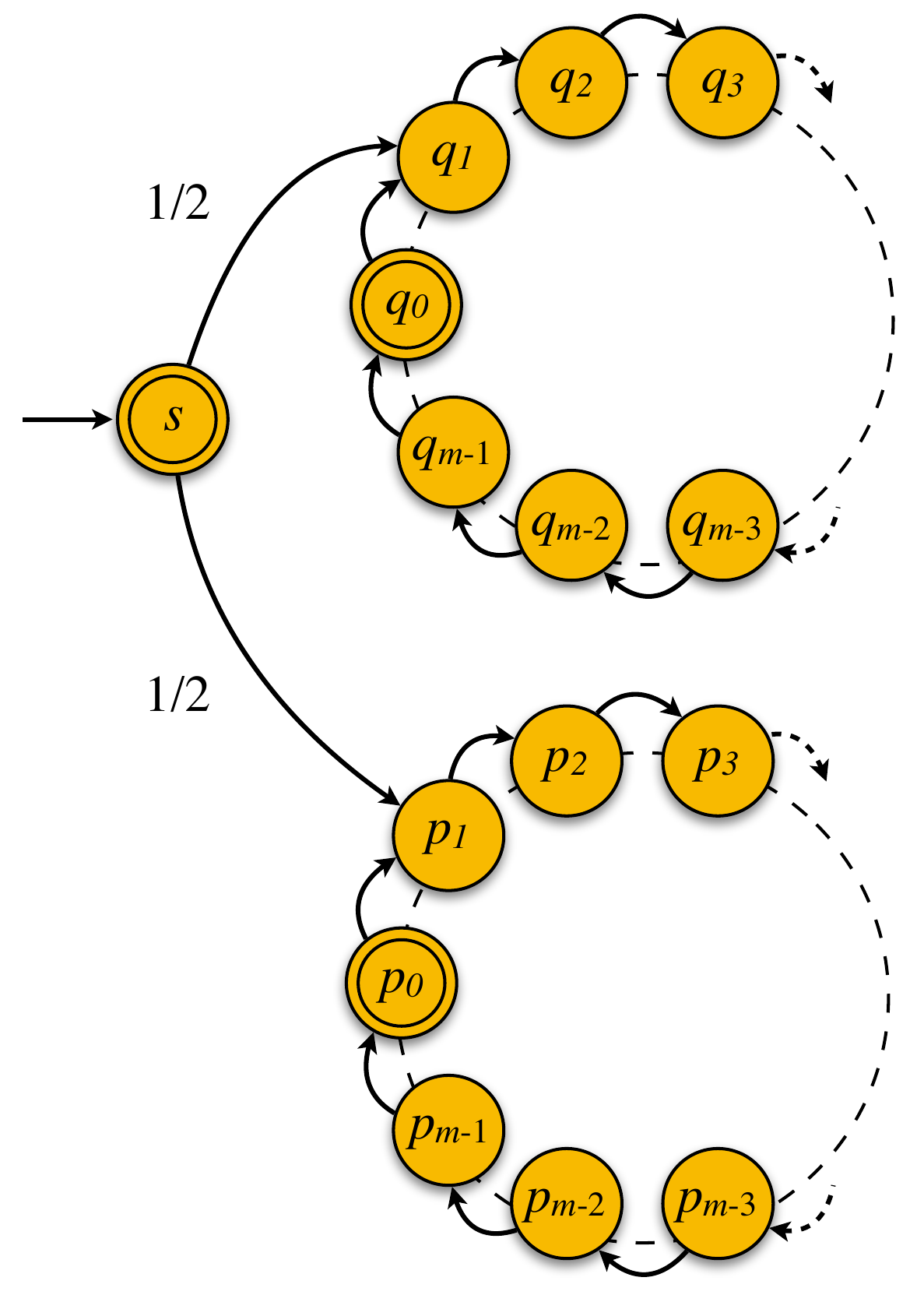}
\end{center}
\vspace{-0.5cm}
\caption{The state graph of the 1pfa $A$ for the language $L_{m\cdot n}$.\label{sglmn}}
\end{figure}
\end{example}

For the sake of completeness, we point out that {\em two-way finite automata} are also considered in the literature.
Very roughly speaking, a two-way finite automaton has the same hardware as a one-way finite automaton, but its input head can move one position
forward or backward, or stand still at each move. Two-way motion of the input head can be adopted by the three paradigms above recalled, thus leading 
to the models of 2dfa's, 2nfa's, and 2pfa's. Formal definitions and properties of two way finite automata may be 
found, e.g.,  in \cite{DS90,HU01,HU79,Kan91,Pa71,Ra63,RS59}.

The computational power of all these (and actually of many other) variants of finite automata has been well established
in the literature along many years of research. As suggested in~{\em (iv)} in the automata based characterization of Chomsky hierarchy above recalled:
\begin{theorem}
The class of languages accepted by $\set{1,2}$dfa's, $\set{1,2}$nfa's, or by 1pfa's with isolated cut point coincides with the class of regular languages.
\end{theorem}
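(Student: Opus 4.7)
The plan is to establish that the five stated models all accept exactly the class of regular languages by exhibiting a chain of simulations. Since regular languages are defined grammatically by \textsc{type 3} productions, I would first anchor the chain at 1dfa's. Given a \textsc{type 3} grammar $G=(\Sigma,Q,P,S)$, I would introduce a fresh state $q_f$ and build a 1nfa with state set $Q\cup\set{q_f}$, initial state $S$, accepting set $\set{q_f}$ (together with $S$ if $S\to\varepsilon\in P$), and transitions $B\in\delta(A,\sigma)$ whenever $A\to\sigma B\in P$ and $q_f\in\delta(A,\sigma)$ whenever $A\to\sigma\in P$; an induction on word length verifies that the accepted language is $L(G)$. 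Conversely, from a 1dfa $A=(Q,\Sigma,\delta,q_0,F)$ I would write the \textsc{type 3} grammar with nonterminals $Q$, start-symbol $q_0$, production $p\to\sigma q$ for each move $\delta(p,\sigma)=q$, and $p\to\varepsilon$ for each $p\in F$, so that derivations mirror computations.

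Next, I would show the remaining nondeterministic and two-way models add no power. The inclusion $\textsc{1dfa}\subseteq\textsc{1nfa}$ is trivial, and for the converse I would invoke the subset construction: the equivalent deterministic machine has state set $\mathbf{2}^Q$, initial state $\set{q_0}$, transition $\delta'(S,\sigma)=\bigcup_{q\in S}\delta(q,\sigma)$, and accepts precisely the subsets meeting $F$. For the two-way models I would invoke Rabin--Scott via crossing sequences (equivalently, transition tables): to every input prefix $u$ I associate the finite map $\tau_u$ that records, for each state in which the head could enter the right boundary of $u$, the state(s) from which it could emerge; since only finitely many such maps exist, they serve as states of a one-way simulator that updates them symbol by symbol and reads acceptance off the terminal map. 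The nondeterministic two-way case is handled in the same way by letting $\tau_u$ return subsets of $Q$.

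The main obstacle is the probabilistic case, i.e., Rabin's isolated cut point theorem. The easy direction is that every 1dfa is a 1pfa with $\set{0,1}$-valued transitions accepting its language with cut point $\tfrac12$ isolated by $\tfrac12$. For the converse, given a 1pfa $A$ with $n=|Q|$ recognizing $L$ with cut point $\lambda$ isolated by $\rho>0$, I would encode its computation as the action of row-stochastic matrices $M_\sigma$ on the initial distribution $\pi_0$: the vector $\pi(\omega)=\pi_0 M_{\sigma_1}\cdots M_{\sigma_k}$ lies in the compact simplex $\Delta_n$, and the acceptance probability equals the linear functional $\pi(\omega)\cdot\chi_F$, where $\chi_F$ is the characteristic vector of $F$. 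I would then declare $\omega\equiv\omega'$ whenever $\pi(\omega\alpha)\cdot\chi_F$ and $\pi(\omega'\alpha)\cdot\chi_F$ lie on the same side of $\lambda$ for every suffix $\alpha$; this equivalence is right-invariant, and by Myhill--Nerode the regularity of $L$ reduces to showing it has finitely many classes.

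The hardest step is the quantitative core of Rabin's argument, namely that stochastic processing is contractive in $\ell_1$: $\|\pi M_\alpha-\pi' M_\alpha\|_1\le\|\pi-\pi'\|_1$ for every suffix $\alpha$. Combined with $|(\pi-\pi')\cdot\chi_F|\le\|\pi-\pi'\|_1$, this shows that $\|\pi(\omega)-\pi(\omega')\|_1<2\rho$ forces the acceptance probabilities of $\omega\alpha$ and $\omega'\alpha$ to differ by less than $2\rho$ for \emph{every} $\alpha$; being both at distance at least $\rho$ from $\lambda$, they must lie on the same side of $\lambda$, and hence $\omega\equiv\omega'$. Since $\Delta_n$ is compact in $\ell_1$, any $2\rho$-separated subset is finite, bounding the number of equivalence classes and proving $L$ regular, which closes the chain.
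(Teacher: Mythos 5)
The paper itself offers no proof of this theorem: it is stated as classical background, with the arguments delegated to the cited literature --- Rabin--Scott \cite{RS59} for the deterministic/nondeterministic models, Shepherdson \cite{Sh59} for the two-way reduction, and Rabin \cite{Ra63} (see also Paz \cite{Pa71}) for the isolated cut point case. Your chain --- grammar--automaton translations anchoring the class, the subset construction, transition tables for the two-way models, and the $\ell_1$-argument for 1pfa's --- is precisely the standard route those citations encode, and every step is sound; in particular the quantitative core (nonexpansiveness $\norm{\pi M-\pi' M}_1\le\norm{\pi-\pi'}_1$ of stochastic matrices, the bound $\abs{(\pi-\pi')\cdot\chi_F}\le\norm{\pi-\pi'}_1$, the conclusion that $2\rho$-close distributions are right-indistinguishable relative to $\lambda$, and finiteness of any $2\rho$-separated subset of the compact simplex) is exactly Rabin's proof, correctly reproduced, and it yields finite index for a right-invariant refinement of the Nerode equivalence, hence regularity. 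One small repair: the paper's \textsc{type 3} grammars allow $A\rightarrow\varepsilon$ for an \emph{arbitrary} nonterminal $A$, not only for the start symbol, so in your grammar-to-1nfa construction you should declare accepting every nonterminal possessing an $\varepsilon$-production (your fresh state $q_f$ then covers only the $A\rightarrow\sigma$ rules); with that adjustment the induction on derivation length goes through unchanged.
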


The class of regular languages is properly contained in the class of languages accepted by isolated cut point 2pfa's \cite{DS90}.
However, when restricting to unary alphabets, even isolated cut point 2pfa's accept exactly unary regular languages \cite{Kan91}.
\smallskip

Regular languages are of fundamental importance in many applications in computer science. Viewing regular languages throughout finite automata
greatly improved compilers and interpreters design, parsing and pattern matching algorithms, cryptography and security protocol testing, computer networks
protocol testing, model checking and software validation. It does not sound as an exaggeration saying that almost any task in computer science 
sooner or later leads to coping with some regular language which can be fruitfully managed via a suitable finite automaton. 

However, beside being a valuable tool in language processing, finite automata represent a formidable theoretical model to deal with those physical systems
which exhibit a predetermined sequence of actions depending on a sequence of events they are presented. Originally, finite automata have been
introduced to describe the electric activity of brain neurons, but soon they have been extensively used in the design and analysis of several devices
such as the control units for vending machines, elevators, traffic lights, combination locks, etc.\,.

Particularly important is the use of finite automata
in {\sc vlsi}-design, namely, in the project of sequential networks which are the building blocks of modern computers and digital systems. Very roughly speaking, 
a sequential network is a boolean circuit equipped with memory. 
Engineering a sequential network typically requires to model its behavior by a finite automaton whose number of states directly influences the amount of hardware
(i.e., the number of logic gates) employed in the electronic realization of the sequential network. From this point of view, having less states in the modeling finite automaton
directly results in employing smaller hardware which, in turn, means having less energy absorption and less cooling problems. These latter physical implementation aspects,
as the reader may easily figure out, turn out to be of paramount importance given the current level of digital devices miniaturization.

These ``physical'' (and other more theoretical) considerations have lead to a well consolidated trend in the literature in which, beside  acceptance capabilities, the \emph{descriptional} power of finite automata is deeply investigated. Within the realm of {\em descriptional complexity} \cite{HK10}, the \emph{size} of finite automata is under consideration, and a common measure for finite automaton size is the \emph{number of states}. In particular, reduction or increasing in the {number of states} is studied, when using different computational paradigms (e.g., deterministic, nondeterministic, probabilistic, quantum, one-way, two-way) on a finite automaton to perform a given task.
Let us quickly recall some very well known results on the descriptional power of different types of finite automata. To this aim, we say that two finite automata $A,A'$ are
equivalent whenever $L(A)=L(A')$.

It is well-known that any $n$-state 1nfa can be converted into an equivalent $2^n$-state 1dfa \cite{RS59}, and that in general such an exponential size blow-up is unavoidable.
In fact, consider the language $E_k$ in Example \ref{Ek}. There, a $k$-state 1nfa accepting $E_k$ is sketched. On the other hand, it can be shown that
any 1dfa for $E_k$ cannot have less than $2^k$ states. A similar exponential gap exists for 1dfa's vs.\ 1pfa's: any $n$-state 1pfa accepting a language with
cut point isolated by $\rho$ can be turned into an equivalent 1dfa with $(1+1/\rho)^n$ states \cite{Ra63}. Even in this case, the exponential blow-up is in general 
``almost unavoidable'' (stating the exact size gap between determinism and probabilism is an open problem).
This can be proved by elaborating on the language $L_{m\cdot n}$ provided in Example \ref{Lmn}. Equivalent 1dfa's for $n$-state 2dfa's and 2nfa's can be obtained,
paying by not less than $n^n$ and $2^{n^2}$ states, respectively \cite{RS59,Sh59}. 

Following this line of research on the succinctness of different computational paradigms,
we are going to investigate whether and how adopting the quantum paradigm of computation may reduce the number of states on finite state automata,
thus providing theoretical foundations for the realization of more succinct devices with all potential benefits in terms of miniaturization and energy consumption above addressed.

To this aim, we will be particularly interested in {\em unary one-way finite automata}, i.e., automata having a unary input alphabet consisting
of the sole symbol $a$.  Clearly, unary one-way finite automata accept unary languages $L\subseteq a^*$. Here, we choose to provide a nice and compact matrix presentation of 
unary one-way finite automata that will naturally lead to formalize the notion of a unary one-way quantum finite automaton.
We recall that a matrix is said to be: \emph{boolean} whenever its entries are either 0 or 1, \emph{stochastic} whenever its
entries are reals from the interval $[0,1]$ and each row sums to 1.

Let $A$ be a unary one-way finite automaton with $\set{q_1,q_2,\ldots,q_n}$ being the set of its states; some of these states are accepting.
Then,~$A$
can be formally written as a triple $A=(\zeta,U,\eta)$, where $\eta\in\set{0,1}^{n\times1}$ is the characteristic
column vector of the {accepting} states, i.e. $\eta_i=1$ if and only if~$q_i$ is an accepting state,
while $\zeta$ and $U$ have different forms depending on the nature of $A$. Precisely, $A$~is a:
\begin{description}
  
\item[1dfa:] $\zeta\in\set{0,1}^n$ is the characteristic row vector of the initial state, $U$ is an $n\times n$ boolean stochastic transition matrix, hence
  $U$ has exactly one 1 per each row, with $U_{ij}=1$ if and only if and only if $A$ moves from the state $q_i$ to the state $q_j$ upon reading~$a$, i.e.,
  $U_{ij}=1$ if and only if $\delta(q_i,a)=q_j$.
  
\item[1nfa:] as above, except that $U$ is boolean with $U_{ij}=1$ if and only if $q_j\in\delta(q_i,a)$.
  
\item[1pfa:] $\zeta\in[0,1]^n$ is a stochastic row vector
  representing the {\em initial probability distribution} of the states\footnote{The definition of a 1pfa previously given admits a single initial state $q_0$
  instead of assigning to each control state the probability of being initial. It can be shown that the two definitions of a 1pfa are actually equivalent from both a computational
  and a descriptional point of view.},
  $U$ is an $n\times n$ stochastic transition matrix with $U_{ij}$ being
 the {\em probability} that $A$ moves from the state $q_i$ to the state~$q_j$ upon reading~$a$, i.e., $U_{ij}=\delta(q_i,a,q_j)$.
  
\end{description}
The reader may easily work out the matrix presentation for the unary 1dfa and the unary 1pfa defined, respectively, in Example \ref{ex:1} and Example \ref{Lmn}.

Let us see how to express the notion of accepted language in this matrix presentation. The situation of  the unary one-way finite automata $A$ at the end its the computation on the input word~$a^k$ is described by the vector $\zeta U^k$
having the following meaning (recall that $\eta$ is the characteristic vector of the final states of~$A$):
\begin{description}
\item[\bf $A$ is a 1dfa:] $\zeta U^k$ is the characteristic vector of the state reached by $A$ at the end of the computation on $a^k$. Thus, the product $\zeta U^k\eta$ returns 1 if the reached state is accepting, 0 otherwise. We say that $A$ \emph{accepts} $a^k$ whenever $\zeta U^k\eta=1$.

\item[\bf $A$ is a 1nfa:] $\zeta U^k$ is the characteristic vector of the set of states reached by $A$ at the end of the computation on $a^k$. Thus, the product $\zeta U^k\eta$ returns the number of reached accepting states. We say that $A$ \emph{accepts} $a^k$ whenever $\zeta U^k\eta\ge1$.

\item[\bf $A$ is a 1pfa:] $\zeta U^k$ is a stochastic vector whose $i$-th component represents the probability that $A$ reaches the state
$q_i$ at the end of the computation on $a^k$. Thus, the product $p_{A}(a^k)=\zeta U^k\eta$ returns the probability for $A$ to reach an accepting state at the end of the computation on~$a^k$, i.e., \emph{the probability that $A$ accepts} $a^k$.
\end{description}
If $A$ is a unary 1dfa or
1nfa, then {\em the accepted language} is defined as
\begin{equation}
L_{A}=\sett{a^k}{k\in\N\mbox{ and }\zeta U^k\eta\geq 1}.
\end{equation}
Let~$A$ be a unary 1pfa.
{\em The language accepted by $A$ with cut point
$\lambda$} is defined as
\begin{equation}
L_{A,\lambda}=\sett{a^k\!}{k\in\N\mbox{ and }p_A(a^k)\!>\!\lambda}.
\end{equation}
As above recalled, the unary 1pfa $A$ accepts a unary language $L\subseteq a^*$ with isolated cut point $\lambda$ whenever $L=L_{A,\lambda}$ and there
exists $\rho>0$ such that $|p_{A}(a^k)-\lambda|\geq\rho$ for every $k\in\N$. 
\smallskip

For the sake of completeness, we point out that when investigating the descriptional power of unary finite automata,
we get size estimations which are slightly different than those above quoted for finite automata working on general input alphabets.
Thus, e.g., it is known that $\e^{\Theta(\sqrt{n\log n})}$ states are necessary and sufficient for 1dfa's to simulate  unary 1nfa's \cite{Ch86}.
The same exponential blow up is proved in \cite{MP00,MP01} 
for simulating unary 2dfa's and 2nfa's by 1dfa's. A~``similar'' exponential gap is also
proved for simulating unary 1pfa's by 1dfa's, however for this latter simulation the 
question should be stated more carefully, and we refer the reader to \cite{BP12, MP01a} for complete details.
Finally, as above recalled, we have that isolated cut point unary 2pfa's accept all and only regular languages, but their
exact descriptional power is still an open question. 

\subsection{Basics of linear algebra}\label{ss:qfa}
We briefly recall some basic notions of linear algebra (see, e.g., \cite{MM88}) useful to recall the quantum picture and, in 
particular, to define the model of quantum finite automata. 
We denote by $\C$ the field of complex numbers. Given a complex number $z\in\C$, its {conjugate} is denoted
by~$\overline{z}$, and its {modulus} by $|z|=\sqrt{z\overline{z}}$. The set of $n\times m$ matrices having entries in $\C$
is denoted by~$\C^{n\times m}$. For matrices $C\in\C^{n\times m}$ and $D\in\C^{m\times r}$, their product is the matrix 
$(CD)_{ij}=\sum_{k=1}^{m} C_{ik}D_{kj}$ in $\C^{n\times r}$.
The {\em adjoint} of a matrix $M\in\C^{n\times m}$ is the matrix 
$M^\dag\in \C^{m\times n}$ with $M^\dag_{ij}=\overline{M_{ji}}$.  
An Hilbert space of dimension~$n$ is the linear space $\C^{1\times n}$ --- in what follows denoted by $\C^n$ for short --- 
equipped with sum and product by elements in $\C$, in which, for any vectors $\zeta,\xi\in\C^n$, the {\em
inner product} $\inner{\zeta}{\xi}=\zeta\xi^\dag$ is defined. If
$\inner{\zeta}{\xi}=0$, we say that~$\zeta$ and $\xi$ are {\em orthogonal}.
If $\zeta$ and $\xi$ are orthogonal and $\norm{\zeta}=1=\norm{\xi}$, then $\zeta$ and $\xi$ are said to be {\em orthonormal}.
The {\em norm} of vector $\zeta$, is defined as
$\norm{\zeta}=\sqrt{\inner{\zeta}{\zeta}}$.  Two subspaces $X,Y$ in~$\C^n$ are
orthogonal if every vector in $X$ is orthogonal to every vector in $Y$;
in this case, the linear space generated by $X\cup Y$ is denoted by
$X\dotplus Y$.

A matrix $M\in\C^{n\times n}$ is said to be {\em unitary} whenever $MM^\dag=I=M^\dag M$, 
where $I\in\C^{n\times n}$ is the identity matrix. Equivalently, $M$ is unitary if and only
if it preserves the norm, i.e., $\norm{\zeta M}=\,\norm{\zeta}$ for every $\zeta\in\C^{n}$.
The eigenvalues of unitary matrices are
complex numbers of modulus $1$, i.e., they are in the form $\eit{}$,
for some real $\vartheta$. A matrix $\O\in\C^{n\times n}$ is said to be {\em Hermitian} whenever
$\O=\O^\dag$.  Let $c_1,\ldots,c_s$ be the eigenvalues of the Hermitian matrix $\O$
and $E_1,\ldots E_s$ be the corresponding eigenspaces. It is well known that: 
\emph{(i)} each eigenvalue $c_k$ is real, 
\emph{(ii)} $E_i$ is orthogonal to $E_j$ for every $i\neq j$,
\emph{(iii)} $E_1\dotplus\cdots\dotplus E_s=\C^{n}$.
Each vector $\zeta\in\C^{n}$ can be uniquely decomposed as
$\zeta=\zeta_1+\cdots+\zeta_s$, where $\zeta_j\in E_j$. The linear
transformation $\zeta\mapsto \zeta_j$ is the {\em projector} $P_j\in\C^{n\times n}$ on the
subspace $E_j$. It is easy to see that $\sum_{j=1}^{s} P_j=I$. An
Hermitian matrix $\O$ is biunivocally determined by its eigenvalues
and its eigenspaces (or, equivalently, by its projectors). In fact, we have
${\O}=c_1 P_1+\cdots +c_s P_s$.

\subsection{Axiomatic for quantum mechanics in short}

Here, we use the elements of linear algebra so far recalled to describe quantum systems (see, e.g., \cite{braket,HU92} for detailed expositions).
Given a set $Q=\set{q_1,\ldots,q_m}$ of {\em basis states}, every $q_i$ can be represented
by its characteristic vector $e_i\in \set{0,1}^m$ having 1 at $i$-th position and 0 elsewhere. 
A {\em quantum state} on $Q$ is a
superposition $\zeta\in\C^m$ of basis states of the form $\zeta=\sum_{k=1}^m\alpha_k e_k$, with 
coefficients $\alpha_k$ being complex {\em amplitudes} satisfying $\norm{\zeta}=1$. 
Given an alphabet $\Sigma=\set{a_1,\ldots,a_l}$ of events, with every event symbol $a_i$
we associate a
unitary transformation $U(a_k):\C^{m}\rightarrow\C^{m}$. An {\em observable} is described by an Hermitian matrix $\O=c_1
P_1+\cdots +c_s P_s$. Suppose that at a given instant a quantum system is described by
the quantum state $\zeta$. Then, we can operate:
\begin{enumerate}

\item {\em Evolution by the event $a_j$}. The new state $\xi=\zeta U(a_j)$ is reached.
This dynamics is {\em reversible}, being that $\zeta=\xi U^\dag(a_j)$.

\item {\em Measurement of $\O$}. Every outcome in
  $\set{c_1,\ldots,c_s}$ can be obtained. The outcome $c_j$ is obtained with
  probability ${\norm{\zeta P_j}}^2=\langle\zeta P_j,\zeta P_j\rangle$, and the state of the quantum system after observing
  such a measurement collapses to the superposition  $\zeta P_j\left/\right.\norm{\zeta P_j}$. The state transformation
  induced by a measurement is typically {\em irreversible}.
\end{enumerate} 

\subsection{One-way unary quantum finite automata}\label{1qfa}

Several models of one-way (fully) quantum finite automata are proposed in the
literature.  Basically, they differ in measurement policy \cite{ABG06,AY18,BMP03,Gru00}. In this paper, we consider the simplest model 
of one-way quantum automata called {\em measure-once} \cite{BC01,BC01a,BP02,MC00}. We 
focus on the unary case, i.e., automata having a single-letter input alphabet $\Sigma=\{a\}$. indeed, the definition of a one-way quantum automata 
on a general alphabet comes straightforwardly.
As done in Section~\ref{prel:fa} for classical models of unary one-way finite automata, we are going to provide a matrix presentation of 
unary one-way quantum finite automata. A more general definition of a 1qfa on a general input alphabet may be aesily 
\smallskip

A unary {\em measure-once one-way quantum finite automaton} (1qfa, for short)
with $n$ basis states, some of which are designated as {\em accepting} states, is formally defined by the triple $A=(\zeta, U, P)$, where:
\begin{itemize}
\item $\zeta\in\C^{n}$, with $\norm{\zeta}=1$, is the initial superposition of basis states,

\item $U\in\C^{n\times n}$ is a unitary transition matrix with $U_{ij}$ being the \emph{amplitude} that
$A$ moves from the basis state $q_i$ to the basis state $q_j$ upon reading $a$, so that $|U_{ij}|^2$ is the
{\em probability} of such a transition,

\item $P\in\C^{n\times n}$ is the projector onto the {\em accepting} subspace, i.e., the
subspace of $\C^n$ spanned by the accepting basis states. The projector $P$
biunivocally individuates the observable $\O=1\cdot P + 0\cdot (I-P)$.  
\end{itemize}
At the end of the computation on the input word $a^k$, the state of $A$ is described by the final superposition~$\zeta U^k$.
At this point, the observable $\O$ is measured, and $A$ is observed in an accepting basis state with probability
$p_A(a^k)=\norm{\zeta U^k P}^2$. This is \emph{the probability that $A$ accepts $a^k$.}

The definition of the unary language $L_{A,\lambda}$ accepted by $A$ with cut point $\lambda$, and the notion of a unary language accepted by $A$ with isolated cut point are identical to those provided in Section~\ref{prel:fa} for the model of unary~1pfa's.

The designation ``measure-once'' given to the model of 1qfa above introduced is due to the fact the observation for acceptance is performed only once,
at the end of input processing. Throughout the rest of the paper, for the sake of brevity, by 1qfa we will be meaning
``measure-once 1qfa'', unless otherwise stated.
\smallskip

Several contributions in the literature show that, surprisingly enough, isolated cut point 1qfa's are less powerful than 
classical models of one-way finite automata. In fact, in \cite{BP02,BC01,MC00} it is proved that 
\begin{theorem}
The class of languages on general alphabets accepted by isolated cut point 1qfa's coincides with the  class of group languages {\em \cite{Pin86}}, a {\em proper} 
subclass of regular languages.
\end{theorem}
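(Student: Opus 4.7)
The plan is to establish a double inclusion plus a separation witness. For the rest of this sketch, recall that group languages are exactly the regular languages whose syntactic monoid is a finite group, equivalently, those recognized by a \emph{permutation 1dfa}: a 1dfa in which every input symbol induces a permutation on the state set.

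For the easy inclusion, \emph{group languages} $\subseteq$ \emph{isolated cut point 1qfa languages}, I would take a permutation 1dfa $A=(Q,\Sigma,\delta,q_0,F)$ with $|Q|=n$ and read it directly as a 1qfa over an $n$-dimensional Hilbert space: let $\zeta$ be the characteristic vector of $q_0$, for each $\sigma\in\Sigma$ let $U(\sigma)$ be the permutation matrix of the bijection $\delta(\cdot,\sigma)$ (permutation matrices are unitary), and let $P$ be the projector onto the subspace spanned by the basis states corresponding to $F$. Then $p_A(\omega)\in\{0,1\}$ for every $\omega$, so any cut point $\lambda\in(0,1)$ is isolated by $\min\{\lambda,1-\lambda\}$. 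For proper inclusion, I would exhibit the regular language $L=\Sigma^*a$ over $\Sigma=\{a,b\}$: its minimal 1dfa merges two distinct states under input $b$, so it is not a permutation automaton, and its syntactic monoid (three elements $[\varepsilon], [a], [b]$ with no inverse for $[a]$ or $[b]$) is not a group.

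The core step is the reverse inclusion. Let $A=(\zeta,\{U(\sigma)\},P)$ be an $n$-state 1qfa with cut point $\lambda$ isolated by $\rho$, and extend the unitaries multiplicatively to $U(\omega)=U(\sigma_1)\cdots U(\sigma_k)$ for $\omega=\sigma_1\cdots\sigma_k$. The set $M=\{U(\omega):\omega\in\Sigma^*\}$ is a submonoid of the compact topological group $U(n)$ of $n\times n$ unitaries. The acceptance probability $p_A(\omega)=\|\zeta U(\omega) P\|^2$ is a Lipschitz function of $U(\omega)$, so by isolation there is some $\epsilon>0$ such that whenever $\|U(\omega\alpha)-U(\omega'\alpha)\|<\epsilon$ for all contexts $\alpha$, the words $\omega,\omega'$ accept with the same verdict in every context. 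Since $U(n)$ admits a finite $\epsilon$-net, a standard pumping-on-the-net argument collapses the Nerode equivalence of $L(A)$ to finitely many classes; hence $L(A)$ is regular. For the group property, one uses that the topological closure $\overline{M}$ is a compact submonoid of $U(n)$, and any compact submonoid of a topological group is itself a subgroup: for any $g\in\overline{M}$, compactness extracts a Cauchy subsequence from $\{g^k\}$, from which $g^{-1}\in\overline{M}$ is produced as a limit. Transporting this inverse structure through the finite Nerode quotient shows the syntactic monoid of $L(A)$ is a group, so $L(A)$ is a group language.

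The main obstacle is the final transfer step. The $\epsilon$-net part gives finiteness of the syntactic monoid rather cheaply, but one must be careful that the group property travels from $\overline{M}\subseteq U(n)$ down to the \emph{syntactic} monoid and not merely to some topological quotient. Concretely, one has to argue that for each class $[\omega]$ there exists a class $[\omega']$ with $[\omega\omega']=[\omega'\omega]=[\varepsilon]$, which amounts to approximating the true inverse $U(\omega)^{-1}\in U(n)$ by an element of $M$ finely enough that the approximation behaves identically to the genuine inverse on every context — a step that relies essentially on both the compactness of $U(n)$ and the uniform isolation gap $\rho$. Once this is done, the syntactic-monoid characterization of group languages closes the argument.
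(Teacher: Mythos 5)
Your sketch is sound, but note first that the paper itself does not prove this theorem: it is stated as a known result with pointers to \cite{BP02,BC01,MC00}, so the only meaningful comparison is with the literature, and your argument is essentially the standard one found there. The forward inclusion via permutation automata (permutation matrices are unitary, acceptance probabilities lie in $\{0,1\}$, so any $\lambda\in(0,1)$ is isolated by $\min\{\lambda,1-\lambda\}$) is correct, as is the separating witness $\Sigma^*a$ with its three-element, inverse-free syntactic monoid. The core direction is also on the right track: compactness of the unitary group, a finite $\epsilon$-net, and the isolation radius $\rho$ together bound the number of Nerode classes, using that $X\mapsto\norm{\zeta XP}^2$ is Lipschitz on unitaries and that the operator norm is invariant under right multiplication by unitaries, so $\norm{U(\omega)U(\alpha)-U(\omega')U(\alpha)}=\norm{U(\omega)-U(\omega')}$ --- which, incidentally, makes your quantification ``for all contexts $\alpha$'' automatic once the matrices themselves are $\epsilon$-close.

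The one place you undersell your own argument is the flagged ``transfer'' obstacle, which is easier than you make it sound: there is no need to approximate $U(\omega)^{-1}$ by elements of $M$ via the abstract compact-submonoid-is-a-subgroup fact. For each fixed $\omega$, compactness of $U(n)$ gives exponents $m<m'$ with $\norm{U(\omega)^{m'}-U(\omega)^{m}}<\delta$, and bi-invariance of the norm under unitaries yields $\norm{U(\omega)^{e}-I}<\delta$ with $e=m'-m\ge 1$. Choosing $\delta$ small relative to $\rho$ and the Lipschitz constant, every two-sided context $(u,v)$ satisfies $|p_A(u\omega^{e}v)-p_A(uv)|<2\rho$, and since both probabilities are at distance at least $\rho$ from the cut point, both words receive the same verdict; hence $[\omega^{e}]=[\varepsilon]$ in the syntactic monoid and $[\omega]$ has inverse $[\omega^{e-1}]$. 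One genuine point of care that your sketch glosses over: the syntactic congruence requires two-sided contexts, whereas your Nerode/net step used only right contexts; it is precisely the two-sided unitary invariance of the operator norm that covers both uniformly. With that, every element of the finite syntactic monoid is invertible, so it is a group and $L(A)$ is a group language, closing the argument along exactly the lines of \cite{BP02,MC00}.
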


This limitation still remains for more general variants of (fully) quantum finite automata
\cite{ABG06,AY18,BMP03,KW97}. To overcome this computational weakness and exactly reach classical acceptance capability, 
hybrid models are proposed in the literature, consisting of classical finite automata ``embedding'' small quantum 
finite memory components (see, e.g., \cite{AY18,BMP14,BMP14b,Hi10,MP06,ZQLG12}).

By restricting to {\em unary} alphabets, the computational power of isolated cut point 1qfa's still remains 
strictly lower than that of classical devices. On the other hand,
it is proved in~\cite{BPa09} that the class of unary languages accepted by ``measure-many'' isolated cut point 1qfa's
coincides with the class of unary regular languages. Roughly speaking, a measure-many 1qfa \cite{AY18,BMP03,KW97} is defined as a
measure-once 1qfa, but the observation for acceptance is performed at each step along the computation.

\section{Theoretical design of a small quantum finite automaton}\label{sec:1qfa}

Although being computationally weaker, 1qfa's may greatly outperform classical devices when {\em size} --- customarily measured
by the {\em number of basis states} --- is considered
(see, e.g., \cite{AG00,AF98,BC01,MP03b,BMP03b,BMP05,BMP06,BMP14a,BMP17,MP07,MPP01}).
To prove this fact, we test the descriptional power of several models of classical and quantum one-way finite automata on the very simple benchmark language
introduced in Example \ref{ex:1}: for any given
integer $m>0$, we let the unary language
\begin{equation}
L_m=\sett{a^k\!}{k\in\N\,\mbox{ and }\,k\mod\;m = 0}.
\end{equation}
Despite its simplicity, this language proves to be particularly size-consuming on classical model of one-way finite automata, as resumed in the following
\begin{theorem}\label{exp}
For any integer $m>0$, let $m=p_1^{\alpha_1}p_2^{\alpha_2}\cdots p_s^{\alpha_s}$ be its integer factorization, for primes~$p_i$ and
positive integers $\alpha_i$. To accept the language $L_m\/$, the following number of states are necessary and sufficient:
\begin{enumerate}[(i)]
\item $m$ states on 1\{d,n\}fa's.
\item $p_1^{\alpha_1}+p_2^{\alpha_2}+\cdots + p_s^{\alpha_s}$ states on 2\{d,n\}fa's and isolated cut point 1pfa's.
\end{enumerate}
\end{theorem}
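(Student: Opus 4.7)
The plan is to split each claim into an upper and a lower bound. The upper bounds come from explicit constructions. For (i), the cyclic 1dfa of Example~\ref{ex:1} already uses $m$ states, and since every 1dfa is a 1nfa this handles both paradigms at once. For (ii), I will generalize Example~\ref{Lmn} via the Chinese Remainder Theorem. For the 1pfa, I spread the initial distribution uniformly over $s$ disjoint cycles of lengths $p_1^{\alpha_1},\ldots,p_s^{\alpha_s}$, with the sole accepting state in each cycle being its entry point; the acceptance probability of $a^k$ then equals $\frac{1}{s}\sum_{i=1}^{s}[p_i^{\alpha_i}\mid k]$, which is $1$ when $m\mid k$ and at most $(s-1)/s$ otherwise, giving cut point $(2s-1)/(2s)$ isolated by $1/(2s)$. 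For the 2dfa I chain the same $s$ cycles sequentially using the two-way head: the head sweeps the input left-to-right through the first cycle, and upon reaching the right end-marker at that cycle's entry state it reverses direction and traverses the input with the second cycle, and so on; acceptance occurs iff every cycle's test succeeds, i.e.\ iff $p_i^{\alpha_i}\mid k$ for all $i$. Both constructions require exactly $\sum_i p_i^{\alpha_i}$ states, and the 2nfa bound is inherited from the 2dfa one.

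For the lower bounds in (i), the 1dfa case is immediate from a Myhill--Nerode argument: the words $a^0,a^1,\ldots,a^{m-1}$ belong to pairwise distinct equivalence classes of $L_m$, since for any $0\le i<j<m$ the suffix $a^{m-j}$ separates them. For the 1nfa case I would pass to Chrobak normal form, which presents any unary 1nfa as a short initial tail joined to a disjoint union of cycles of lengths $\ell_1,\ldots,\ell_k$; its accepted language is then a finite union of arithmetic progressions with common differences $\ell_j$. For such a union to coincide exactly with $\{k:m\mid k\}$, each contributing progression must lie inside $m\N$, which forces $m\mid\ell_j$ whenever the $j$-th cycle contributes an accepting offset. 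Hence $\sum_j\ell_j\geq m$.

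The lower bounds in (ii) form the main obstacle. For unary 2dfa's (and hence also 2nfa's) the computation on $a^k$ is governed by a finite system of loops, and the accepted language is eventually periodic with period dividing the least common multiple of the loop lengths. Realizing $L_m$ therefore forces this lcm to be at least $m$; minimising the sum of positive integers whose lcm equals $m$ is a classical number-theoretic problem whose optimum is $\sum_i p_i^{\alpha_i}$, attained by taking exactly one loop per prime power. For isolated-cut-point 1pfa's the argument is more delicate: the stochastic transition matrix $U$ must make the acceptance probability $\zeta U^k\eta$ separate $L_m$ from its complement by a positive margin \emph{for all} $k\in\N$, and a spectral analysis shows this forces $U$ to carry, in its invariant subspaces, primitive $p_i^{\alpha_i}$-th roots of unity for every $i$; a dimension count on those subspaces yields the bound $\sum_i p_i^{\alpha_i}$. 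This final spectral step, which must rule out subtler probabilistic simulations using non-cyclic stochastic matrices, is where I expect the real technical effort to concentrate.
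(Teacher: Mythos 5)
Your overall architecture is sound, and it is worth noting how it relates to the paper: the paper's own proof of Theorem~\ref{exp} is essentially a pointer, taking the upper bound in (i) from Example~\ref{ex:1}, attributing the matching lower bound to the pumping lemma, and citing both bounds in (ii) to \cite{MP00} (two-way automata) and \cite{MPP01} (isolated cut point 1pfa's). Your reconstruction follows the same route as those references and gets all the constructive parts fully right: the $m$-state cyclic 1dfa, the Myhill--Nerode count of $m$ classes, the uniform-mixture 1pfa with acceptance probability $\frac{1}{s}\sum_i [p_i^{\alpha_i}\mid k]$ and cut point $(2s-1)/(2s)$ isolated by $1/(2s)$ (legitimate under the paper's initial-distribution convention for 1pfa's), and the sweeping 2dfa, which is exactly the construction of \cite{MP00}. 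Your 1nfa lower bound via Chrobak normal form is also correct, and arguably cleaner than the paper's appeal to pumping, provided you state explicitly the part of Chrobak's theorem \cite{Ch86} that you silently use: the normal form of an $n$-state unary 1nfa has total cycle length at most $n$, so ``$m$ divides the length of every contributing cycle'' actually yields $n\ge m$. (A direct pumping argument also works: if $n<m$, an accepted $a^k$ with $k\ge n$, $m\mid k$, has a pump of length $\ell\le n<m$, so $a^{k+\ell}\notin L_m$ would be accepted.)

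The genuine gaps are both in the (ii) lower bounds, and you should be aware they are not small. For two-way automata, ``eventually periodic with period dividing the lcm of the loop lengths'' only yields $\sum_i p_i^{\alpha_i}$ if the relevant loops can be chosen over pairwise \emph{disjoint} sets of states, so that their lengths sum to at most $n$; without that disjointness your minimization argument gives only $n\ge\max_i p_i^{\alpha_i}$. Establishing this decomposition of two-way computations --- and doing it for nondeterministic machines, since your parenthetical ``(and hence also 2nfa's)'' runs in the wrong direction: a 2dfa lower bound does not transfer to 2nfa's, it is the 2nfa bound that implies the 2dfa one --- is precisely the technical content of \cite{MP00}. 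For 1pfa's, a dimension count on invariant subspaces cannot work as you state it: requiring one primitive $p_i^{\alpha_i}$-th root of unity per $i$ forces only $s$ eigenvalues, not $\sum_i p_i^{\alpha_i}$ states. The correct route, and the one in \cite{MPP01}, goes through Perron--Frobenius structure: each ergodic class of the stochastic matrix having period $d$ occupies at least $d$ states (its $d$ cyclic classes are nonempty) and its peripheral spectrum is exactly the set of $d$-th roots of unity; isolation of the cut point forces the accepted language to be ultimately periodic with period dividing $D=\mathrm{lcm}(d_j)$ over the class periods $d_j$, hence $m\mid D$, and then the same number-theoretic minimization you already invoke (using $ab\ge a+b$ for $a,b\ge 2$) gives $\sum_j d_j\ge\sum_i p_i^{\alpha_i}$. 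So your plan is the right one, but the two steps you gesture at are where the proof actually lives, and as written they are asserted rather than proved.
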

\begin{proof}
{\em (i)} In Example \ref{ex:1}, an $m$-state 1dfa (which is clearly a particular 1nfa) for $L_m$ is provided.
The fact that $m$ states are necessary for any
1\{d,n\}fa to accept $L_m$ can be easily
obtained by using the pumping lemma for regular languages~\cite{HU01,HU79}. {\em (ii)} For 2\{d,n\}fa's, the result is proved in \cite{MP00}.
For 1pfa's, the result is proved in~\cite{MPP01}.
\end{proof}

By adopting the quantum paradigm, we can obtain isolated cut point 1qfa's for $L_m$ of incredibly small size:

\begin{theorem}\label{2state}
  For any integer $m>0$, the language $L_m$ can be accepted by an isolated cut point
  1qfa with \emph{two} basis states.
\end{theorem}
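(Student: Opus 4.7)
The plan is to exhibit a concrete 1qfa with two basis states using a rotation on $\mathbb{C}^2$ whose angle is tuned so that the accepting projection distinguishes multiples of $m$ from non-multiples. Concretely, I would take $\cal A=(\zeta,U,P)$ with basis states $\{q_0,q_1\}$, initial superposition $\zeta=(1,0)$, the projector $P=\mathrm{diag}(1,0)$ onto the accepting basis state $q_0$, and the unitary
\[
U=\begin{pmatrix}\cos\theta & -\sin\theta \\ \sin\theta & \cos\theta\end{pmatrix},\qquad \theta=\frac{2\pi}{m}.
\]
A direct induction gives $U^k=R_{k\theta}$ (rotation by $k\theta$), so $\zeta U^k=(\cos(k\theta),-\sin(k\theta))$.

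The next step is to compute the acceptance probability on input $a^k$. Following the definition from Section~\ref{1qfa},
\[
p_{\cal A}(a^k)=\norm{\zeta U^k P}^2=\cos^2\!\left(\frac{2\pi k}{m}\right).
\]
From this closed form the separation is transparent: if $k\equiv 0\pmod m$ then $p_{\cal A}(a^k)=1$, while if $k\not\equiv 0\pmod m$ then $k\bmod m\in\{1,\dots,m-1\}$ and the value of $\cos^2(2\pi k/m)$ is maximized over this residue set at $k\equiv\pm 1\pmod m$, giving
\[
p_{\cal A}(a^k)\le \cos^2\!\left(\frac{2\pi}{m}\right)<1.
\]

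Finally, to exhibit isolation, I would pick any cut point $\lambda$ strictly between these two values, e.g.\ $\lambda=\tfrac12\!\left(1+\cos^2(2\pi/m)\right)$, and set $\rho=\tfrac12\!\left(1-\cos^2(2\pi/m)\right)>0$. Then for every $k\in\N$ we have $|p_{\cal A}(a^k)-\lambda|\ge\rho$, and $p_{\cal A}(a^k)>\lambda$ iff $k\bmod m=0$, so $L_{{\cal A},\lambda}=L_m$ with cut point isolated by $\rho$.

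There is no real obstacle here; the only subtlety worth flagging is that the isolation gap $\rho$ shrinks with $m$ (like $\Theta(1/m^2)$), but the theorem only asks for existence of some positive isolation for each fixed $m$, not a uniform one. This is also the reason confidence amplification techniques, invoked later in the paper, will be needed to turn this theoretical $\cal A$ into a reliable implementation.
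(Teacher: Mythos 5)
Your overall architecture matches the paper's proof exactly --- a $2$-state rotation automaton $(\zeta,U,P)$ with $\zeta=(1,0)$, $P=\mathrm{diag}(1,0)$, acceptance probability $\cos^2(k\theta)$, and a cut point placed midway between $1$ and the largest value attained off the language --- but your choice of angle $\theta=2\pi/m$ is wrong, and the error is not cosmetic. Since $\cos^2$ has period $\pi$ (not $2\pi$), you get $p_{\cal A}(a^k)=1$ iff $2k/m\in\Z$, i.e.\ iff $m$ divides $2k$. For every \emph{even} $m$ the word $a^{m/2}\notin L_m$ is then accepted with probability $1$ (for $m=2$ your automaton accepts \emph{every} word with certainty), so no cut point whatsoever can recover $L_m$ and the construction fails outright. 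Even for odd $m$ your quantitative claim is false: $\cos^2(2\pi k/m)$ over nonzero residues is \emph{not} maximized at $k\equiv\pm1\pmod m$ but where $2\pi k/m$ is closest to $\pi$, i.e.\ near $k\approx m/2$, where it attains $\cos^2(\pi/m)$. For instance with $m=5$, $k\equiv 2$ gives $\cos^2(4\pi/5)=\cos^2(\pi/5)\approx 0.654$, which exceeds your proposed cut point $\lambda=\bigl(1+\cos^2(2\pi/5)\bigr)/2\approx 0.548$, so $L_{{\cal A},\lambda}\neq L_m$ even in the odd case with your stated $\lambda$ and $\rho$.

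The paper avoids all of this by taking $\theta=\pi/m$, so that
\begin{equation*}
p_{\cal A}(a^k)=\cos^2\!\left(\frac{\pi k}{m}\right)=1 \iff \frac{k}{m}\in\Z \iff k\bmod m=0,
\end{equation*}
and the maximum over $k\not\equiv 0\pmod m$ is genuinely $\cos^2(\pi/m)$, attained at $k\equiv\pm1$; then $\lambda=\bigl[1+\cos^2(\pi/m)\bigr]/2$ with isolation $\rho=\bigl[1-\cos^2(\pi/m)\bigr]/2$ works for \emph{all} $m>0$. Your closing remarks (the isolation shrinking with $m$, and amplification being the remedy) are correct in spirit and agree with the paper's discussion, but the proof as written is broken for even $m$ and mis-estimates the isolation for odd $m$; replacing $2\pi/m$ by $\pi/m$ throughout repairs it and reproduces the paper's argument.
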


\begin{proof}
  We define the 1qfa $\cal A$ with 2 basis states as:
  \begin{align}
  \A=\Bigg(
  	&\zeta=(1,0), \nonumber\\
	&U_m = \left(\begin{array}{cc}
	\cos(\pi/m) & \sin(\pi/m)\\[1ex]
        -\sin(\pi/m) & \cos(\pi/m)\end{array}\right),\nonumber\\[1ex]
       &P=\left(\begin{array}{cc}
        1 & 0\\[1ex]
        0 & 0\end{array}\right)
        \Bigg). \label{A:Lm}
  \end{align}
  One may easily verify that $U$ is a unitary matrix, and that
\begin{equation}
  (U_m)^k = \left(\begin{array}{cc}\cos(\pi k/m) & \sin(\pi k/m)\\[1ex]
      -\sin(\pi k/m) & \cos(\pi k/m)\end{array}\right).
\end{equation}
Straightforward calculations show that the probability that $\A$ accepts
the word $a^k$ amounts to
\begin{align}\label{p:aut}
p_{\A}(a^k)&=\norm{\zeta (U_m)^kP}^2
=\cos^2\left(\frac{\pi k}{m}\right)\nonumber\\[1ex]
&=\left\{
\begin{array}{ll}
1 & \mbox{if } k\mod m = 0\\[1ex]
<\cos^2(\pi/m) & \mbox{otherwise}.
\end{array}
\right.
\end{align}
In words, our 1qfa~$\A$ accepts with certainty the words
  in $L_m$, while the acceptance probability for the words not in $L_m$ is
  bounded above by $\cos^2\left({\pi}/{m}\right)<1$. 

So, we can set the cut point $\lambda=[1+\cos^2\left({\pi}/{m}\right)]/2$ and isolation
  $\rho=[1-\cos^2\left({\pi}/{m}\right)]/2$, and conclude that $L_m$ is accepted by the
  1qfa $\A$ with 2 basis states and cut point $\lambda$ isolated by~$\rho$.
  \end{proof}
In Figure~\ref{f:scheme:qa}, we depict the 1qfa $\cal A$ of Eq.\ (\ref{A:Lm}) in order to highlight the input word~$a^k$, 
the initial automaton state
$\zeta$, the unitary operator $U_m$ and the measurement described by the projector~$P$.
\begin{figure}[hbt]
\begin{center}
\includegraphics[width=0.9\columnwidth]{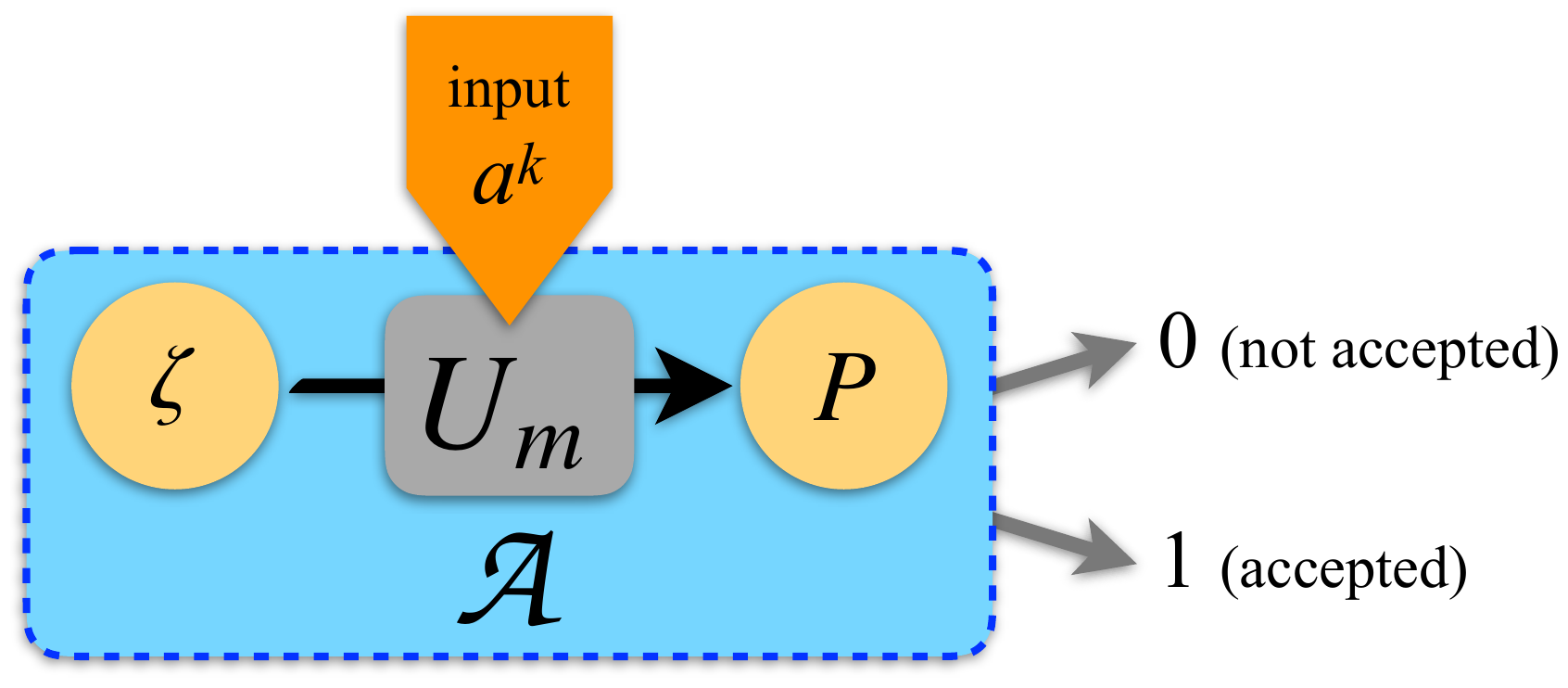}
\end{center}
\vspace{-0.5cm}
\caption{Scheme of the 1qfa ${\A}$ accepting the language $L_m$: given the initial automaton state $\zeta$
and the input word $a^k$ the automaton outputs ``1'' (accepted) or ``0'' (not accepted).
See the text for details.\label{f:scheme:qa}}
\end{figure}

  It is worth noting that the isolation $\rho=[1-\cos^2\left({\pi}/{m}\right)]/2$ around the cut point of the 1qfa $\cal A$
  of Eq.~(\ref{A:Lm}) tends to~0
  for $m\rightarrow+\infty$. Hence, the higher $m$ grows the higher the error probability is, i.e., with high probability
  $\cal A$ may erroneously accept (reject) words not in $L_m$ (words in $L_m$). To overcome this lack of precision, several 
  modular design frameworks have been settled in the literature, aiming to enlarge cut point isolation paying by increasing the number of basis states \cite{AN09,BMP03,BMP03a,BMP05,BMP06,BMP14,BMP17,MP02,MP07}.
  Within these frameworks, for \emph{any desired} isolation $\rho>0$, a 1qfa can be theoretically defined, which accepts~$L_m$
  with cut point isolated by~$\rho$ and featuring $O(\frac{\log m}{\rho})$ basis states. Although the number of basis states now 
  depends on $m$, still it remains exponentially lower than the number of states of equivalent classical one-way finite automata displayed
  in~Theorem~\ref{exp}. In addition, the proposed $O(\frac{\log m}{\rho})$-state 1qfa turns out to be the \emph{smallest possible}. 
  In fact, in \cite{BMP06} it is proved that any 1qfa accepting $L_m$ with cut point isolation $\rho$ must have at least
  $\frac{\log m}{\log(1+2/\rho))}$ basis states.
   
   It should be stressed that all the design frameworks proposed in the literature, aiming to build extremely succinct 1qfa's
   not only for $L_m$ but also for more general families of languages, use the simple 1qfa $\cal A$ of Eq.\ (\ref{A:Lm}) as a crucial 
   {\em building block}.  Within these frameworks, the 1qfa $\cal A$ is suitably composed in a modular pattern
   by using traditional compositions (i.e., direct product and sum of quantum systems),
   in order to enhance {\em precision} in language recognition. In particular, from this perspective, a physical realization of  the 1qfa~$\cal A$
   is not only interesting {\em per 
   se}, but it may provide a concrete computational component upon which to physically project more sophisticated and precise 1qfa's by traditional compositions of quantum systems.
   
\section{Photonic implementation of the quantum finite automaton}\label{s:implementation}

In this section we describe the physical implementation of the 1qfa $\cal A$ of Eq.\ (\ref{A:Lm}). The experimental realization
is based on the polarization degree of freedom of single photons and their manipulation through suitable rotators of polarization.
For the sake of clarity, before discussing the physical implementation, we will summarize in the following the basic formalism used
to describe this kind of quantum system.

\subsection{The Dirac formalism}\label{ss:dirac}

In order to describe the physical implementation of the 1qfa~${\A}$ of Eq.~(\ref{A:Lm}) accepting the language $L_m$, it is useful to review the standard
notation for quantum mechanics introduced by Dirac \cite{braket}.
This will help the reader to easily pass from the notation used in the previous sections to the one we will use in the following.
In this notation, the state ``$\psi$'' of a quantum system is described by the symbol $| \psi \rangle$ which is, in general, a
complex column vector in a Hilbert space. In the present work, we are interested in the (linear) polarization state of a single photon,
therefore, only the two basis states $| H \rangle$ and~$| V \rangle$, referring to the horizontal ($H$) and vertical ($V$) polarization,
respectively, are needed. Indeed, due to the very law of quantum mechanics, any normalized linear combination of these
two vectors represents a quantum state. For instance a single photon polarized at an angle $\theta$ with respect to~the horizontal is described
by the state vector:
\begin{equation}\label{theta:state}
 | \theta \rangle = \cos \theta\, | H \rangle + \sin \theta\, | V \rangle\,.
\end{equation}
Since we are in the presence of only two basis states, we can give a geometrical representation of them and of the
corresponding spanned space, as shown in Figure~\ref{f:pol}(a).
\begin{figure}[htb!]
\begin{center}
\includegraphics[width=\columnwidth]{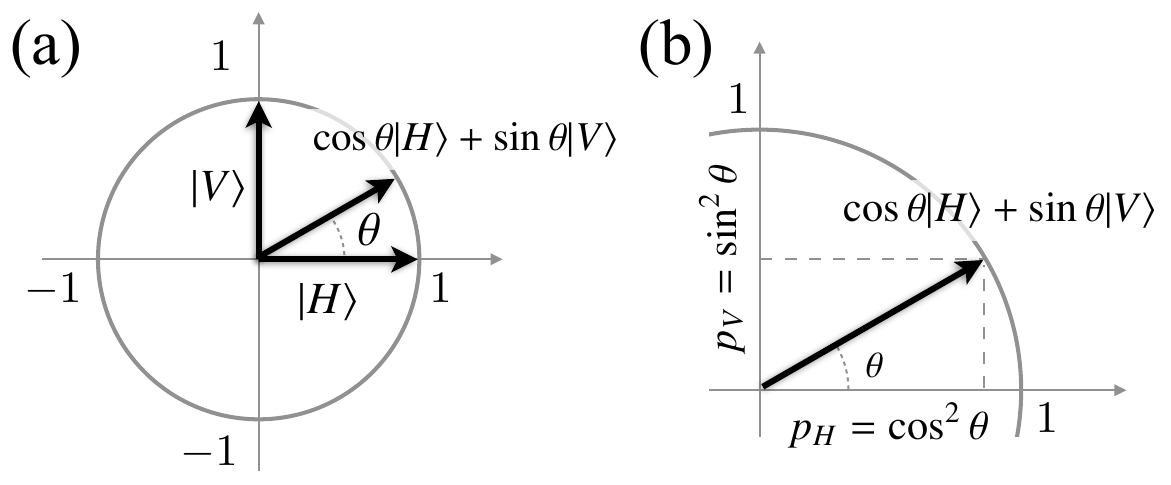}
\end{center}
\vspace{-0.7cm}
\caption{(a) Two dimensional representation of the polarization states $| H \rangle$ (horizontal polarization)
and $| V \rangle$ (vertical polarization) of a single photon. We also report the representation of the single photon
state with (linear) polarization at the angle $\theta$. (b) The square of the projections along the
horizontal and the vertical axes correspond to the probability of finding the photon with horizontal and 
vertical polarization, respectively.\label{f:pol}}
\end{figure}

In this formalism, it~is clear the correspondence:
\begin{align}
 | H \rangle = \zeta^{\dag} = 
 \left( \begin{array}{c} 1\\ 0\end{array}\right)\!, \mbox{ and }
 \langle H | = (| H \rangle)^\dag = \zeta = (1,0)\,,
\end{align}
where $\zeta$ is the same state introduced in Eq.~(\ref{A:Lm}). Analogously,
we have:
\begin{align}
 | V \rangle = \xi^\dag =
 \left( \begin{array}{c} 0\\ 1\end{array}\right)\,,\quad \mbox{and} \quad
| \theta \rangle =
 \left( \begin{array}{c} \cos \theta \\ \sin \theta \end{array}\right)\,,
\end{align}
In Section~\ref{ss:qfa} we introduced the inner product $\inner{\zeta}{\xi}=\zeta\xi^\dag$ between the states $\zeta$ and $\xi$.
Using the Dirac formalism we have:
\begin{equation}
\zeta\xi^\dag = \langle H | V \rangle = 0\,,
\end{equation}
where we also used the orthonormality of the involved states. If we now introduce the projectors:
\begin{equation}
\Pi_H = | H \rangle\langle H | = P = \left(\begin{array}{cc}
        1 & 0\\
        0 & 0\end{array}\right)\,,
\end{equation}
where $P$ is the same as in Eq.~(\ref{A:Lm}), and
\begin{equation}
\Pi_V = | V \rangle\langle V | = Q = \left(\begin{array}{cc}
        0 & 0\\
        0 & 1\end{array}\right)\,,
\end{equation}
given the state $| \theta \rangle$, with $(| \theta \rangle)^\dag=\vartheta$, we have:
\begin{subequations}
\label{prob:H:V}
\begin{align}
p_H &= \langle \vartheta P, \vartheta P\rangle \nonumber\\
&=\langle \theta | \Pi_H | \theta \rangle  = | \langle H | \theta \rangle |^2 = \cos^2\theta\,,\\[1ex]
p_V &= \langle \vartheta Q, \vartheta Q\rangle \nonumber\\
&=\langle \theta | \Pi_V | \theta \rangle = | \langle V | \theta \rangle |^2 = \sin^2\theta\,,
\end{align}
\end{subequations}
where we used $\Pi_J^2 = \Pi_J$, with $J\in\{H,V\}$, and $\langle a | b \rangle = \overline{\langle b | a \rangle}$.
The geometrical meaning of $p_H$ and~$p_V$ is reported in Figure~\ref{f:pol}(b) whereas, from the physical point of view,
they correspond to the probability of finding the photon with horizontal or vertical polarization, respectively.

In the context of the polarization of single photons, the analogous of the unitary operator~$U_m$ defined in
Eq.~(\ref{A:Lm}) is the operator $R(\pi/m)$ which corresponds to a rotator of polarization, which rotates the polarization
of the photons by an amount~$\pi/m$. We can write $R(\pi/m) = U_m^\dag$.
Thereafter, the one-step evolution of the state $| H \rangle = \zeta^{\dag}$ reads:
\begin{equation}
R(\pi/m) | H \rangle = \zeta U_m.
\end{equation}

\subsection{Photonic quantum automaton}

In Figure~\ref{f:scheme:ph}, we depicted the basic elements of the photonic quantum automaton implementing the
1qfa $\A$ of Eq.~(\ref{A:Lm}) accepting the language $L_m$. 
Given the input word $a^k$ (see also Figure~{\ref{f:scheme:qa}}), a single photon, generated in the state $| H \rangle$ is sent through $k$
rotators of polarization, where each rotator applies a rotation of a fixed amount $\pi/m$.
It is worth noting that in order to actually reproduce the computation of a 1qfa, 
a single rotation should be applied step by step upon reading each input symbol, 
since the input word length is not known in advance. 
After the rotators, the single photon is sent to a
polarizing beam splitter (PBS), a device which transmits (reflects) the horizontal (vertical) polarization component of the input state. Since after
the rotators the state of the photon is $| \theta \rangle$, given in Eq.~(\ref{theta:state}), it is detected by the $H$ or $V$ detector (see Figure~\ref{f:scheme:ph}) with the probabilities given in Eqs.~(\ref{prob:H:V}). It is worth noting that, as expected, $p_H(k)$ is equal to
the automaton acceptance probability $p_{\A}(a^k)$, see Eq.~(\ref{p:aut}).
\begin{figure}[htb!]
\begin{center}
\includegraphics[width=\columnwidth]{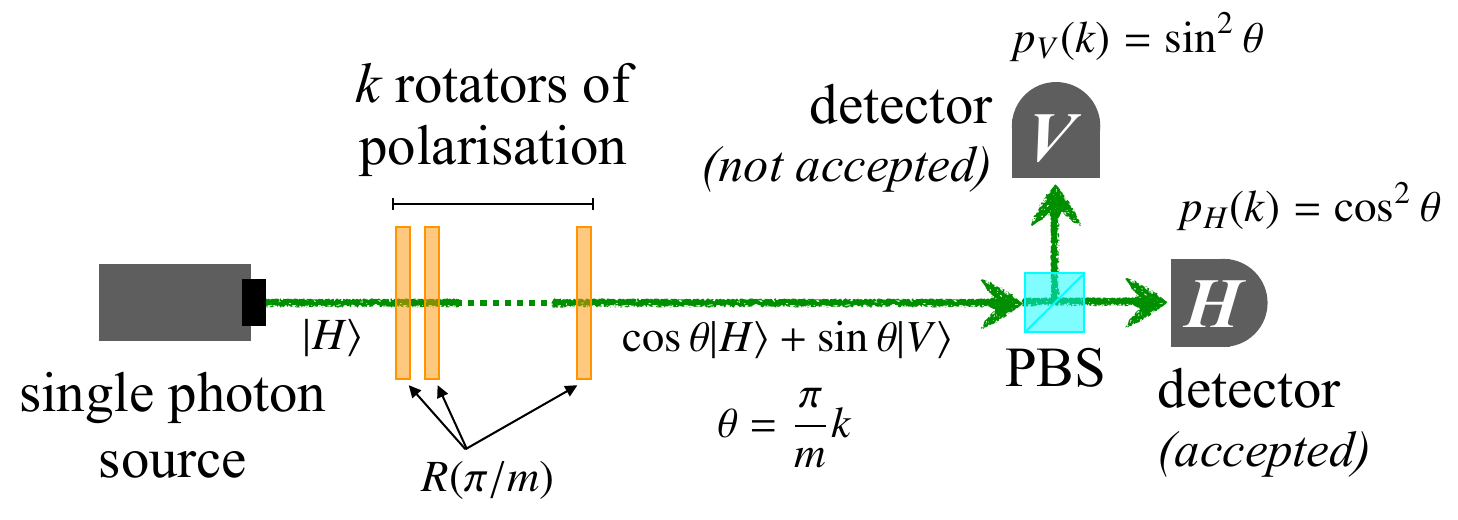}
\end{center}
\vspace{-0.6cm}
\caption{Sketch of the photonic implementation of the 1qfa ${\A}$ accepting the language $L_m$.
Single photons are generated in the polarization state $| H \rangle$, then they pass through $k$ polarization rotators,
$k$ being the length of the input word $a^k$. Each rotator implements the operator $R(\pi/m)$ rotating
the polarization by the amount $\pi/m$: the overall polarization rotation is $\theta = \pi k / m$.
Finally the photons are addressed to two photodetectors by means of a polarizing
beam splitters (PBS) according to their horizontal ($H$) or vertical ($V$) polarization.\label{f:scheme:ph}}
\end{figure}
As mentioned in Theorem~\ref{2state}, this kind of automaton accepts \emph{with certainty} the word $a^k$
if~\mbox{$k \mod m = 0$}, but it has also a high error probability to accept the word if $k \mod m = 1$. In fact, in this case, $p_H(k)$
attains its maximum $\cos^2 (\pi/m)$.

To reduce the error probability, one can send   $M = N_c(m)$ copies of the same input word $a^k$,
collect the number $N_c(k)$ of counts  at the detector $H$ and evaluate the ratio:
\begin{equation}
f_k = \frac{N_c(k)}{N_c(m)} \xrightarrow[]{M \gg 1} p_H(k)\,.\,
\end{equation}
In this scenario, we let $f_1 = f_{(k \mod m)\, =\, 1}$ be the highest frequency less than $f_0=f_{(k \mod m)\, =\, 0} =~1$.
I.e., $f_1$ is the highest frequency for words that are erroneously accepted (those words $a^k$ for which $k \mod m=1$),
and $f_0$ is the 
frequency of those words that are correctly accepted (those words $a^k$ for which $k \mod m=0$). Thus, we can define the threshold frequency:
\begin{equation}
f_{\rm th} = \frac{f_0 + f_1}{2}= \frac{1 + f_1}{2}\,,
\end{equation}
and we use the following
strategy:
\begin{equation}\label{strategy}
\left\{
\begin{array}{l}
\mbox{if}~f_k > f_{\rm th} \Rightarrow a^k~\mbox{is accepted by $\A$,}\\[1ex]
\mbox{if}~f_k < f_{\rm th} \Rightarrow a^k~\mbox{is rejected.}
\end{array}
\right.
\end{equation}
It is clear that such a strategy leads to a zero error probability, namely, all and only the words in~$L_m$ can have $f_k > f_{\rm th}$.
However, in a realistic scenario the number of detected photons is subjected to Poisson statistical fluctuations, due to the very nature of the
detection process~\cite{loudon}. So, given the word $a^k$, the number of detected counts $N_c(k)$ fluctuates according to a
Poisson distribution with mean $\mu_k = \langle N_c \rangle \cos^2(\pi k / m)$, where $\langle N_c \rangle$ is the \emph{average}
number of detected photons obtained for $k \mod m = 0$.
Thus, it is possible to  have a \emph{detected frequency} $\tilde f_k = N_c(k) /   \langle N_c \rangle$ which incorrectly satisfies
$$\tilde f_k  > \tilde f_{\rm th}=(\tilde f_0+\tilde f_1)\,/\,2
~~~~~~~~\mbox{(resp., $\tilde f_k<\tilde f_{\rm th}=(\tilde f_0+\tilde f_1)\,/\,2$)}$$
also for a word $a^k$ not belonging (resp., belonging) to the language $L_m$, 
leading to a non  null experimental acceptance error probability $p_{\rm err}$.
\par
If we assume $\mu_1= \langle N_c \rangle \cos^2(\pi / m)  \gg 1$, the distribution of the detected number of counts for $k\mod m = 1$ can be approximated by a Gaussian distribution function with mean and variance given by same value $\mu_1$. Analogously, for $k\mod m = 0$ we have a Gaussian distribution with mean and variance equal to $\mu_0 = \langle N_c \rangle$. Now we can find a more suitable threshold $N_{\rm th}$ of the detected counts  by considering the intersection between the two Gaussians, namely:
\begin{equation}\label{Nth}
N_{\rm th} =  \langle N_c \rangle \big| \cos(\pi/m) \big|
\sqrt{1-\frac{\ln \big[ \cos^2(\pi/m) \big]}{\langle N_c \rangle \sin^2(\pi/m)}}.
\end{equation}
and the corresponding discrimination strategy reads:
\begin{equation}\label{strategy1}
\left\{
\begin{array}{l}
\mbox{if}~N_c(k) \ge N_{\rm th} \Rightarrow a^k~\mbox{is accepted by $\A$,}\\[1ex]
\mbox{if}~N_c(k) < N_{\rm th} \Rightarrow a^k~\mbox{is rejected.}
\end{array}
\right.
\end{equation}
The experimental error probability is thus given by (we consider only the two relevant contributions):
\begin{align}
p_{\rm err} &=
\int_{-\infty}^{N_{\rm th}} \frac{d x}{\sqrt{2 \pi \mu_0}} \, \exp\left[ - \frac{(x - \mu_0)^2}{2 \mu_0} \right]\nonumber\\[1ex]
&\hspace{1cm}+ \int_{N_{\rm th}}^{\infty} \frac{d x}{\sqrt{2 \pi \mu_1}} \, \exp\left[ - \frac{(x - \mu_1)^2}{2 \mu_1} \right]\,,
\label{Perr}
\end{align}
where $ 1 \ll \langle N_c \rangle \cos^2(\pi / m) = \mu_1 < N_{\rm th} < \mu_0 = \langle N_c \rangle$.
We note that $p_{\rm err}$ corresponds to the probability of accepting (resp., rejecting) the word $a^k$ whenever it should be rejected (resp., accepted).
In Figure~\ref{f:P:err} we plot the error probability for different values of $m$: as one may expect, the larger is $m$ the grater should be
the average number of counts $\langle N_c \rangle$ in order to have a small error probability.
\begin{figure}[h!]
\begin{center}
\includegraphics[width=0.9\columnwidth]{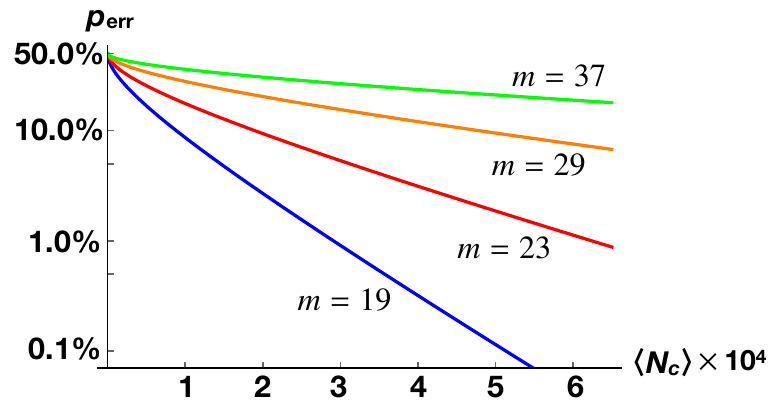}
\end{center}
\vspace{-0.5cm}
\caption{Error probability (wrong acceptance probability) of the 1qfa~$\A$ accepting the language $L_m$, for different
values of $m$, as a function of the average number of counts $\langle N_c \rangle$.
\label{f:P:err}}
\end{figure}

\section{Experimental results}\label{s:experimental}
	
The main elements of our physical implementation of the 1qfa $\cal A$ accepting the language $L_m$
are sketched in Figure~\ref{f:scheme:ph}. However, in order to reduce the losses and other sources of noise,
in the actual setup we replace the action of the $k$ polarization rotators on the input word $a^k$ by using a single rotator applying
an overall rotation of $\theta = \pi k / m$, which ``simulates'' the whole computation
of the 1qfa: for this reason we will refer to
our system as a \emph{photonic quantum simulator} \cite{q:sim} of the quantum automaton.
As mentioned in the previous section, an actual 1qfa does not have an {\it a-priori} knowledge about the
length $k$ of the input word. In fact, it reads the input word symbol by symbol while applying a rotation $\pi/ m$ per each scanned input symbol $a$. Practically, this can be implemented, for instance, by a motorized rotator of polarization,
but this is beyond the scope of the present work.
Nevertheless, it is worth noting that a more advanced technology, e.g., based on integrated optics or
optoelectronics, can be used to realize the very setup of Figure~\ref{f:scheme:ph}.
\begin{figure}[h!]
		\begin{center}
			\includegraphics[width=\columnwidth]{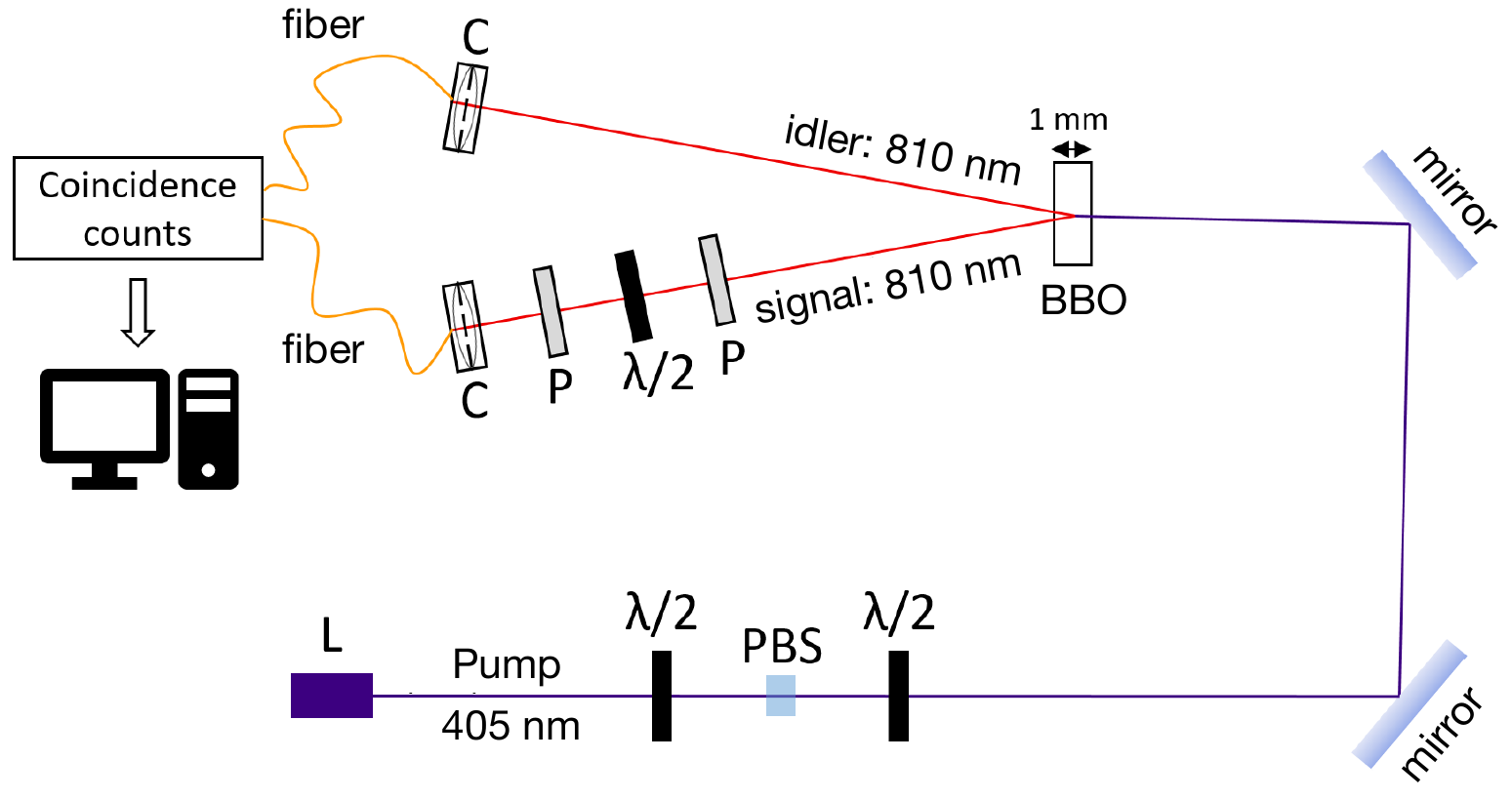}
		\end{center}
		\vspace{-0.5cm}
		\caption{Schematic diagram of the experimental setup. A $405$-nm cw laser diode (L) generates a pump beam which passes through an amplitude modulator, composed by a half-wave plate ($\lambda/2$) and a polarizing beamsplitter cube (PBS), and through another half-wave plate to set the polarization. The beam interacts with a 1-mm long BBO crystal generating photons at $810$ nm via PDC. The two beams individuated by the horizontal plane are called signal and idler: on the signal's branch there are two polarizers (P) separated by a half-wave plate. Photons are finally focused into two multimode fibers through two couplers (C), and sent to homemade single-photon counting modules.  
			\label{experimentalsetup}}
	\end{figure}

The experimental setup is shown in Figure \ref{experimentalsetup}. 
\begin{enumerate}
\item The pump derives from a $405$-nm cw InGaN laser diode, which we chose in order to use detectors in Silicon, namely the ones with the lowest noise on the market: indeed, these work with maximum quantum efficiency at $810$ nm, which is the same wavelength of the photons generated via parametric down conversion (PDC) from a $405$-nm pump. 
\item The laser beam passes through an amplitude modulator composed by a half-wave plate and a polarizing beamsplitter cube (PBS), and then through another half-wave plate to set the polarization vertical with respect to the optical bench. 
\item The interaction between the pump and a $1$-mm long BBO crystal generates photons at $810$ nm with horizontal polarization, along the surface of a cone, via type-I-eoo PDC: to this purpose the optical axis of the crystal lies on the vertical plane at the phase-matching~angle. \item The intersection of the cone with the horizontal plane individuates two beams (branches): the signal and the idler. It is possible to finely tune the angle of the outgoing photons by properly rotating the principal axis of the BBO. 
\item Along the signal branch, a polarizer ensures the transmission of the horizontally polarized photons, then a half-wave plate is used to simulate the $k$ polarization rotators and finally another horizontal polarizer transmits the photons to the detector. This last half-wave plate can be manually rotated and is equipped with graduations where a unit corresponds to $4^\circ$ in polarization: by considering the working principle of the half-wave plate, this can be obtained by actually rotating the plate by $2^\circ$. Therefore, in general, in order to obtain a rotation in polarization of amount
$\theta$, one should rotate the plate by $\theta/2$.
\item On each branch, photons are finally focused into a multimode fiber and sent to a homemade single-photon counting module, based on an avalanche photodiode operated in Geiger mode with passive quenching \cite{quench}. 
We chose to measure the coincidence counts in order to obtain a better signal-noise ratio: indeed the photodiodes produce a thermal background such that approximately the $1$\% of the direct counts are dark counts, while the coincidence dark counts are only the $0.001\%$ of the coincidence counts.
\end{enumerate}
In Figure~\ref{f:exp:res}, we show typical experimental results from our photonic simulator of the 1qfa~$\cal A$ for the language $L_m$, with $m=5$ (this choice allows us to put better in evidence the role of the statistical fluctuations of the detected number of photons). In this case, a single rotation of polarization (taking place, e.g., on the input word of length $k=1$) has $\theta=36^\circ$, which corresponds to rotating by $9$ units the half-wave plate on signal's branch (see point (5) in the above description of our experimental setup).
\begin{figure}[h!]
\begin{center}
\includegraphics[width=\columnwidth]{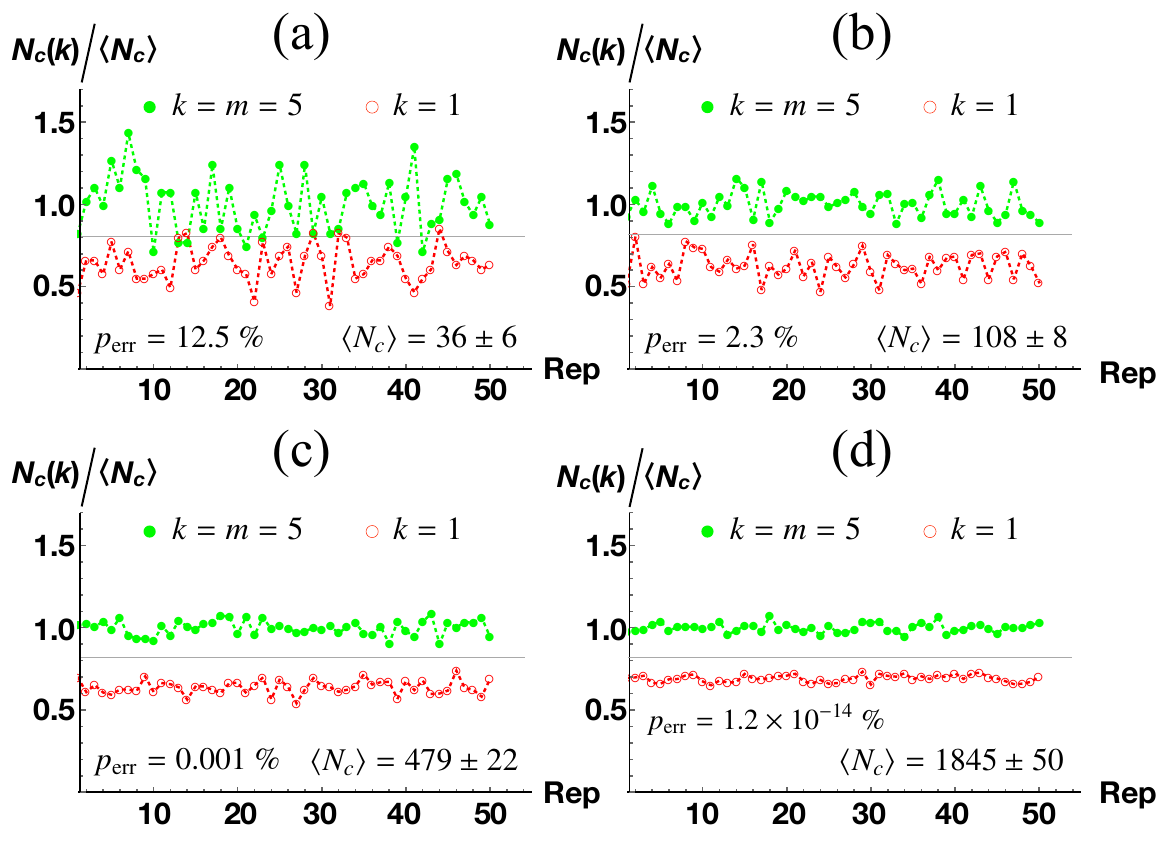}
\end{center}
\vspace{-0.6cm}
\caption{Experimental ratio $N_c(k)/\langle N_c\rangle$ with $k=m=5$ (green disks) and $k=1$ (red circles)
as a function of the experimental run number (Rep), $k$ being the length of the input word $a^k$
of the 1qfa $\A$ accepting the language $L_m$.
We considered different values of the overall average number of counts: (a) $\langle N_c \rangle = 36$, (b) $\langle N_c \rangle = 108$,
(c) $\langle N_c \rangle = 479$ and (d) $\langle N_c \rangle = 1845$.
The horizontal lines are the threshold values $N_{\rm th}$ given in Eq.~(\ref{Nth}). In the plots we also report the theoretical error probability
from Eq.~(\ref{Perr}). The reduction of the relative statistical fluctuations is evident.
\label{f:exp:res}}
\end{figure}
\par
Here we only show the interesting results for input words~$a^k$ of length $k=5$ and $k=1$.
Such two inputs, respectively representing a word in $L_5$ and one of the most-prone-to-error-classification 
words \emph{not} in $L_5$,  
turn out to be critical for testing the accuracy of
the discrimination strategy we use. 
Furthermore, in order to highlight the reduction of the statistical fluctuations, we plot
the ratio $N_c(k) / \langle N_c \rangle$. Each point corresponds to the number of counts at the detector $H$
when the average total number of counts is $\langle N_c \rangle = 36, 108, 479$ and
$1845$, respectively, which can be obtained varying pump's power by rotating the half-wave plate of the amplitude modulator. We repeated the experiments 50 times with an acquisition time of~$1$s for each of the two values of~$k$.
It is clear that increasing $\langle N_c \rangle$ reduces the relative fluctuations and, thus, the error probability decreases accordingly.
\par
\begin{figure}[h!]
\begin{center}
\includegraphics[width=\columnwidth]{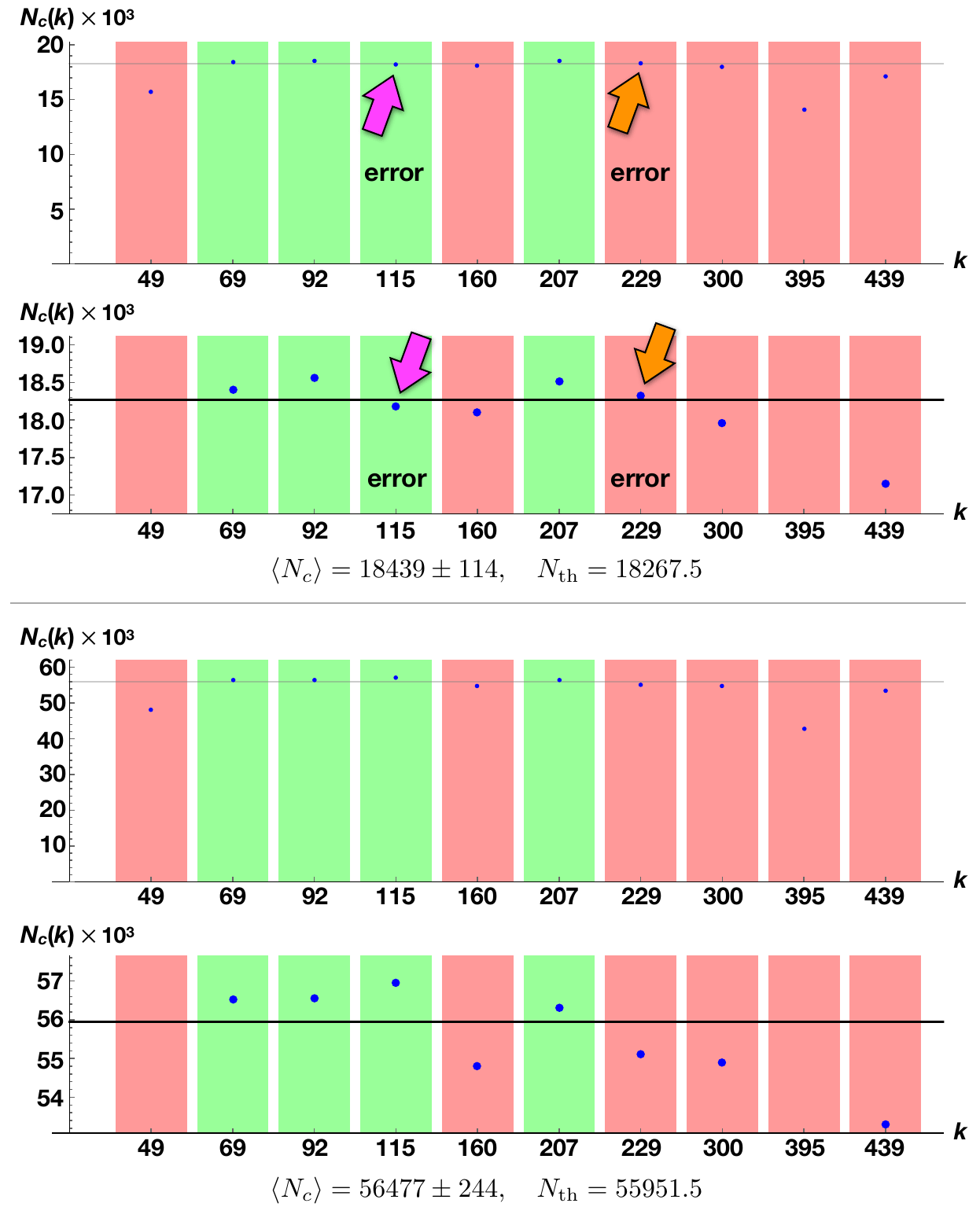}
\end{center}
\vspace{-0.5cm}
\caption{Examples of the experimental number of counts $N_c(k)$ (dots) as a function of the length $k$ of the input word $a^k$ (we have chosen 10 values of $k$ randomly in the interval $[1, 500]$). The horizontal lines refer to the threshold $N_{\rm th}$ on the number of counts for the discrimination strategy (see the text for details): if $N_c(k) \ge N_{\rm th}$ the word $a^k$ is accepted. The color of the vertical bars refers to the theoretical acceptance (green) or rejection (red) of the corresponding input $a^k$ by the 1qfa $\cal A$ accepting the language $L_m$, with $m=23$.  The average number of counts is $\langle N_c \rangle = 18439 \pm 114$ (top panel, corresponding to $N_{\rm th} = 18267.5$) and $\langle N_c \rangle = 56477 \pm 244$ (bottom panel, leading to $N_{\rm th} = 55951.5$). In both the panels, the lower plots are a magnification of the region around the threshold $N_{\rm th}$.
\emph{(Top panel)} In this case the error probability evaluated from Eq.~(\ref{Perr}) is $p_{\rm err} = 10.3$\%: the number 229 is accepted according to our strategy, since the number of counts (see the orange arrows) is larger than the threshold (horizontal solid line), but it should be rejected; on the other hand the number $115$ should be accepted but it is rejected since the number of counts (see the magenta arrows) is below the threshold. 
\emph{(Bottom panel)} Here the error probability evaluated from Eq.~(\ref{Perr}) is $p_{\rm err} = 1.3$\%: now the number 229 is rejected and the number $115$ is accepted, according to the definition~of~$L_m$. See the text for details.
\label{f:out}
\vspace{-.238cm}}
\end{figure}
\par
To better appreciate the performance of our photonic simulator, we consider our 1qfa $\cal A$
accepting the language $L_m$, with $m=23$. In this case, a single rotation of polarization (taking place, e.g., on the input word of length $k=1$) has 
(approximately) $\theta=8^\circ$, which corresponds to $2$ units in the half-wave plate's scale (see point (5) in the above description of our experimental setup).  
In the top panel of Figure~\ref{f:out} we report examples of the number of counts $N_c(k)$ for just one given experimental run as a function of  different values~$k$ of the length of the input word $a^k$ and for $\langle N_c \rangle = 18439$. The acquisition time of each point is $10$s. We can see that, due to the statistical fluctuations mentioned above, sometimes the automaton fails to accept the word: this is the case of the input lengths 115 and~229, as
discussed in the figure caption. We remark that in the latter cases we have chosen two particular
experimental runs in which the automaton fails: if we had considered the average over many runs, we would have
found that the automaton always succeed \emph{on average}, since the standard deviation, due to the statistical scaling, can be reduced at will. Of course, given a particular run,
the error is independent of $k$, but depends only on the random, statistical fluctuations, which can be
controlled by increasing $\langle N_c \rangle$, as we can see in the bottom panel of Figure~\ref{f:out}, where we report the results of the photonic simulator
taking the same words of the top panel as inputs but with $\langle N_c \rangle = 56477$, obtained with an acquisition time of~$30$s. This can be also understood by considering Eq.~(\ref{Perr}): the error probability is reduced from $p_{\rm err} = 10.3$\% for $\langle  N_c \rangle = 18439$ of the previous case to the current $p_{\rm err} = 1.3$\%
for $\langle  N_c \rangle = 56477$.

\section{Conclusions}\label{s:conclusion}
We have suggested and demonstrated a photonic realization of quantum finite 
automata able to recognize a well known family of unary periodic languages. Our device exploits the polarization degree of freedom of single photons and 
their manipulation through linear optical elements. In particular, we have 
designed and implemented a one-way quantum finite automaton~$\cal A$ accepting 
the unary language $L_m=\sett{a^k\!}{k\in\N\,\mbox{ and }\,k\mod\, m = 0}$
with only 2 basis states and isolated cut point. Notice that any classical finite automaton for $L_m$ 
requires a number of  states which grows with $m$.
We have implemented the quantum finite automaton $\cal A$ using the polarization degree of freedom of a single photon and 
have exploited a discrimination strategy to reduce the acceptance error probability.
\par
It is worth noting that, for the particular one-way quantum finite automaton we considered, we exploited only
the polarization degree of freedom of (quantized) optical fields and photodetection.
Therefore, one can implement a similar automaton
also exploiting polarization of a classical coherent field (a laser beam) and intensity measurements.
Nevertheless, our experiment uses single photons, that are intrinsically quantum objects, and, thus,
it paves the way to more complex quantum finite automata we are planning to address and which exploit genuine quantum resources,
such as entanglement. In fact, the quantum technology employed in our implementation is the same
used in the current quantum information processing setups based on optical states.
\par
Besides being interesting in itself for fundamental reasons, 
our physical realization of the one-way quantum finite automaton $\cal A$ 
provides a concrete implementation of a small quantum computational 
component that can be used to physically
build more sophisticated and precise quantum finite automata. Indeed,  
several modular design frameworks have been modeled and widely investigated from a theoretical point 
of view
\cite{AF98,AY18, AN09,BMP03,BMP03a,BMP05,BMP06,BMP14,BMP17,MP02,MP07,MPP01} 
to build succinct and precise quantum finite
automata performing different tasks, where the module~$\cal A$ plays 
a crucial role. Within these frameworks, by suitably assembling
a sufficient number of $\cal A$-like modules via traditional compositions of quantum systems (i.e., direct products and sums), 
the existence of succinct and precise quantum finite automata has been theoretically shown.
From this perspective, our results are instrumental to a deeper understanding of possible 
physical implementations of these design frameworks by means of photonic 
technology, and pave the way for the construction of other more powerful 
models of quantum finite automata.
\par
\begin{acknowledgments}
M.~G.~A.~Paris is member of GNFM-INdAM. C.~Mereghetti and B.~Palano are members of GNCS-INdAM.
\end{acknowledgments}

\end{document}